\documentclass[12pt]{article}

\usepackage{graphicx}
\usepackage{caption}
\usepackage{amsmath,amsfonts,bm,amssymb}
\usepackage{natbib}
\usepackage{array,dcolumn,booktabs,float}
\usepackage{amsfonts}
\usepackage{amssymb}  
\usepackage{amsthm}  
\usepackage{subfigure}
\usepackage{setspace}
\usepackage{verbatim}
\usepackage{color}
\usepackage{soul}
\usepackage{url}
\usepackage{hyperref}
\usepackage{qtree}
\usepackage{algorithm}
\usepackage{algorithmic}
\usepackage{pdflscape}

\addtolength{\oddsidemargin}{-.8in}%
\addtolength{\evensidemargin}{-.8in}%
\addtolength{\textwidth}{1.6in}%
\addtolength{\textheight}{-.3in}%
\addtolength{\topmargin}{-.1in}%

\newcommand{\R}{\mathbb{R}}

\newcommand{\I}{1{\hskip -2.5 pt}\hbox{I} }

\newcommand{\mockalph}[1]{}

\newtheorem{theorem}{Theorem}

\newcommand{\btheta}{ {\boldsymbol \theta} }

\newcommand{\bmu}{ {\boldsymbol \mu} }
\newcommand{\balpha}{ {\boldsymbol \alpha} }
\newcommand{\boeta}{ {\boldsymbol \eta} }

\newcommand{\minx}{0}
\newcommand{\maxx}{\infty}
\newcommand{\minpar}{0}
\newcommand{\maxpar}{\infty}
\newcommand{\comire}{\textsc{c}o\textsc{m}i\textsc{r}e }
\newcommand{\comirep}{\textsc{c}o\textsc{m}i\textsc{r}e}

\bibpunct{(}{)}{;}{a}{}{,}

\makeatletter
\renewcommand\section{\@startsection {section}{1}{\z@}%
                                   {-3.5ex \@plus -1ex \@minus -.2ex}%
                                   {2.3ex \@plus.2ex}%
                                   {\normalfont\large\scshape}}
                                   
\renewcommand\subsection{\@startsection {subsection}{1}{\z@}%
                                   {-3.5ex \@plus -1ex \@minus -.2ex}%
                                   {2.3ex \@plus.2ex}%
                                   {\normalfont}}                                   
\makeatother

\title{Convex Mixture Regression for Quantitative Risk Assessment}
\author{Antonio Canale\thanks{Department of Statistical Sciences, University of Padua \tt canale@stat.unipd.it}, $\,$
Daniele Durante\thanks{Department of Decision Sciences, Bocconi University \tt daniele.durante@unibocconi.it} $\,$ and 
David B. Dunson\thanks{Department of Statistical Science, Duke University \tt dunson@duke.edu}
}

\begin{document}

\maketitle
	
\abstract{
There is wide interest in studying how the distribution of a continuous response changes with a predictor.  We are motivated by environmental applications in which the predictor is the dose of an exposure and the response is a health outcome. A main focus in these studies is inference on dose levels associated with a given increase in risk relative to a baseline. Popular methods either dichotomize the continuous response or focus on modeling changes with the dose in the expectation of the outcome. Such choices may lead to information loss and provide inaccurate inference on dose--response relationships. We instead propose a Bayesian convex mixture regression model that allows the entire distribution of the health outcome to be unknown and changing with the dose.  To balance flexibility and parsimony, we rely on a mixture model for the density at the extreme doses, and express the conditional density at each intermediate dose via a convex combination of these extremal densities.  This representation generalizes classical dose--response models for quantitative outcomes, and provides a more parsimonious, but still powerful, formulation compared to nonparametric methods, thereby improving interpretability and efficiency in inference on risk functions. A Markov chain Monte Carlo algorithm for posterior inference is developed, and the benefits of our methods are outlined in simulations, along with a study on the impact of \textsc{ddt} exposure on gestational age.
}

{\bf Keywords:} Additional risk; Benchmark dose; Conditional density estimation; Convex density regression; Dose--response; Nonparametric density regression.

\maketitle

\section{Introduction}\label{intro}
It is of substantial interest to study how the distribution of an outcome $y \in \mathcal{Y}$ varies with a predictor $x \in \mathcal{X} \subseteq \R$.  Focusing on the case in which $\mathcal{Y} \subseteq \R$, with the response variable univariate and continuous, we let $f(y {\mid} x)$ denote the conditional density of $y$ given $x$.  Our  focus is on environmental health studies in which $x$ is the dose of a potentially adverse exposure, and $y$ defines a health response.  In such studies there is a wide interest in relating dose to the risk of an adverse health outcome, with such dose--response models forming the basis of  quantitative risk assessment and regulatory guidelines on safe levels of exposure \citep[e.g.][Chapter 4]{pieg:2005}.

As a real--data illustrative application---which incorporates features common to different observational dose--response studies---we focus on relating \textsc{dde} exposure in pregnant women to the risk of a premature delivery \citep{longnecker2001}.  The data set is obtained from a sub-study of the US Collaborative Perinatal Project (CPP), which was a large prospective study of US pregnant women and their children.  In this sub-study, \textsc{dde}, which is a persistent metabolite of the pesticide  \textsc{ddt}, was measured in the maternal serum during the third trimester of pregnancy.  \textsc{dde} is lipophilic and is stored in fat tissues, so that women build up a body burden which could affect pregnancy.  

\begin{figure*} 
\centering
   \includegraphics[width=.98\textwidth]{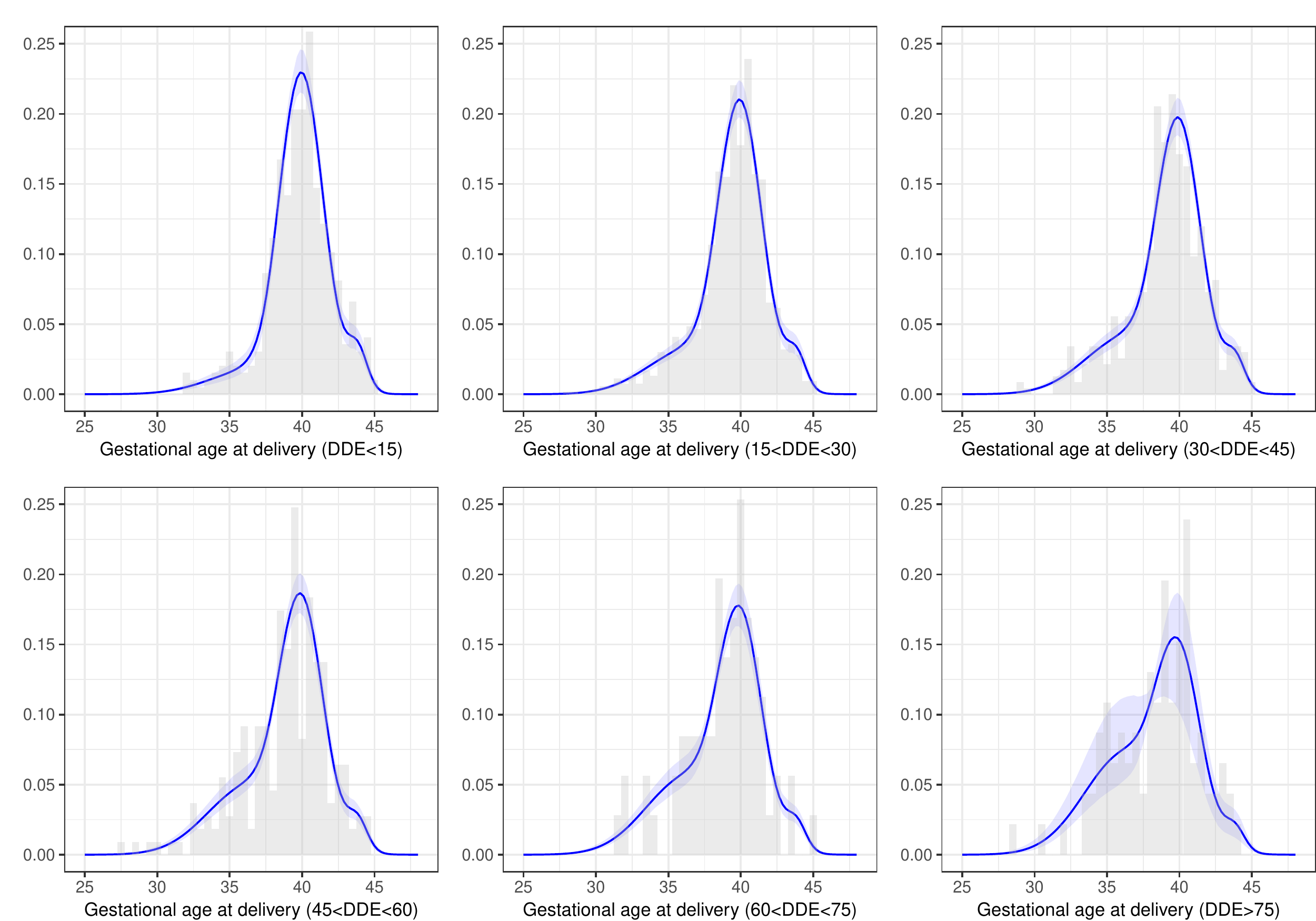} 
   \caption{Histograms of the observed gestational age at delivery for selected dose intervals, along with the pointwise posterior mean (continuous blue lines), and 95\% credible intervals (shaded blue areas) for $f(y\mid x)$ calculated in the central points  $(7.5, 22.5, 37.5, 52.5, 67.5, 125)$ of each dose interval, under the proposed convex mixture regression model. The figure appears in color in the electronic version of this article.}
   \label{fig:motivate}
\end{figure*}

Available  risk assessment studies for these  data rely on a Gaussian regression to infer changes  in gestational age with  \textsc{dde} exposure, or leverage a logistic regression to study  \textsc{dde} effects on preterm birth, which coincides, typically, with a gestational age  below $37$ weeks.  Although these models are useful, according to the histograms in Figure~\ref{fig:motivate}, the Gaussian assumption  is unrealistic, whereas dichotomizing the data prior to inference   leads to information loss \citep[e.g.][]{westkondel}. In fact, to obtain a more complete and accurate characterization of risk it is necessary to estimate how the whole distribution of the gestational age at delivery shifts with  \textsc{dde}. We address this goal via a novel Bayesian convex mixture regression (\comirep) model which allows the entire distribution of the gestational age to be unknown and changing with dose. The proposed formulation combines mixture representations and convex interpolations to balance flexibility and parsimony. This allows more accurate inference and quantitative risk assessment compared to Gaussian regression, and improves efficiency along with interpretability compared to fully nonparametric models.

\subsection{Models and methods for quantitative risk assessment}\label{lit}
In quantitative risk assessment one seeks to estimate the dose level associated with a given small increase in risk of an adverse health event relative to the background risk corresponding in general to zero dose.  This estimated  benchmark dose (BMD$_q$) is useful in supporting regulatory decisions, thus motivating inference on dose--response relationships to obtain the BMD$_q$---defined as the dose corresponding to a 100$q$\% risk of an adverse health event above that for unexposed individuals.

Due to the abundance of dichotomous health data, the vast majority of the literature has focused on discrete responses \citep[e.g.][]{piego_2014,fronc:2014}. There has been instead less consideration of continuous outcomes. Earlier approaches dichotomize the continuous response via  a clinical threshold and then apply models for binary data \citep[e.g.][]{longnecker2001}. From a statistical perspective, this  leads to a loss of information, efficiency, and validity compared to dose--response studies on  the original scale of the data \citep{crump:1995,westkondel}, thereby motivating a literature on quantitative risk assessment modeling the continuous response on its original scale. Popular formulations assume homoscedastic  Gaussian responses with a dose--dependent expectation $\mu(x)$ commonly defined as $\mu(x)={\mu_{\minpar}}+({\mu_{\maxpar}}-{\mu_{\minpar}})\psi(x; {\boldsymbol{\lambda}})$ for every $x \in [0, \infty)$, with $\mu_0$ and $\mu_\infty$ denoting the expectations of $y$ at $x=0$ and $x \to \infty$, respectively, whereas $\psi(x; {\boldsymbol{\lambda}})$ represents a monotone nondecreasing dose--response function having $\psi(0; {\boldsymbol{\lambda}})=0$, and $\psi(x; {\boldsymbol{\lambda}})\rightarrow 1$ for $x \rightarrow \infty$  \citep{ritz:2013,ritz_2005}. Other parametric models for $\mu(x)$ can be found in \citet{westkondel,kodell1993, piego:2005}.

Once the expectation $\mu(x)$ has been estimated, the excess risk above the background is commonly measured via the  additional risk function $R_{\mbox{\textsc{a}}}(x,a)$  \citep[][]{kodell1993}, which is defined as
\begin{eqnarray} 
R_{\mbox{\textsc{a}}}(x,a) &=&  \mbox{pr}(y \leq a \mid x)- \mbox{pr}(y \leq a \mid x=0) = F_x(a)-F_0(a),
\label{add_risk}
\end{eqnarray}
where $a \in \mathcal{Y}$ corresponds, in general, to a threshold of clinical interest, and $F_x(a)=F(a \mid x)$ is the conditional cumulative distribution function of the response at dose exposure $x$, computed in the threshold $a$. In the absence of relevant clinical guidelines for the cutoff value $a$, it is common practice to rely on tail values such as $a=\mu_0 - k \sigma$ with $k=2$ or $k=3$ \citep{kodell1993, westkondel, piego:2005}. \citet{yucatalano} consider instead quantile regression, which avoids specification of a cutoff  and incorporates heteroscedasticity in Gaussian data. Based on \eqref{add_risk}, the  BMD$_q$ of interest for inference is the solution of the equation $R_{\mbox{\textsc{a}}}(x,a) = q$. Hence, differently from dichotomization strategies, the above methods leverage the entire information in the data to learn $F_x(a)$, and require a cutoff---e.g. $a=37$ in our application---after the model has been estimated. According to  \eqref{add_risk}, reliable inference on $R_{\mbox{\textsc{a}}}(x,a)$ and BMD$_q$ requires accurate learning of $F_x(a)$, corresponding to an unknown continuous conditional distribution.

Although the aforementioned methods facilitate tractable inference on $F_x(a)$ and $R_{\mbox{\textsc{a}}}(x,a)$, the Gaussian assumption is difficult to justify in many toxicology studies. For example, as shown in Figure~\ref{fig:motivate}, this assumption is violated in our  illustrative data set, in terms of heteroscedasticity, skewness, and multimodality. Due to this, alternative contributions consider mixtures of Gaussians, which improve flexibility  while maintaining tractable  inference in risk assessments. Refer to \citet{razzaghi2000}, \citet{he:etal2010}, and \citet{hwan:pennel} for relevant formulations. These models rely on a general specification for the conditional density $f_x(y)=f(y\mid x)$, via 
\begin{eqnarray}
f_x(y)= \sum_{h=1}^H \nu_h(x)  \frac{1}{\sigma_h}\phi\left\{ \frac{y - \mu_h(x)}{\sigma_h}\right\},
\label{eq:mixtureregression}
\end{eqnarray}
where $\nu_h(x)$ is a probability weight for the $h$-th mixture component varying with $x$, and $\phi\{\cdot\}$ is the standard Gaussian density. 

In performing inference under the above mixture model, \citet{razzaghi2000} consider a representation with $H=2$ mixture components and fixed probability weights  $\nu_h(x) = \nu_h$ for all $x \in \mathcal{X}$. This choice is motivated by  the nonresponse phenomenon  in dose--response studies \citep[e.g.][]{good:1979}, but $H=2$ components may be insufficiently flexible in many situations.   An alternative possibility to address this issue is to fix $H$ at a conservative upper bound---potentially infinite---and allow the data to inform on the occupied number of mixture components under a Bayesian framework. One example  corresponds to the class of \textsc{anova} dependent Dirichlet process (\textsc{ddp}) mixture models developed by  \citet{deio:etal}. Although this formulation provides a more general alternative to a finite mixture model, there are still flexibility issues associated with the assumption of constant probability weights. This has motivated more flexible representations allowing the weights to change with $x$ \citep[e.g.][]{duns:park:2008, rodriguez:2011}. Notable applications in quantitative risk assessment inspired by these models can be found in \citet{he:etal2010}, and \citet{hwan:pennel}. However, the increased flexibility of these methods comes at a cost in terms of interpretability and parsimony in characterizing functionals of interest.  This may lead to a loss of efficiency, thereby increasing uncertainty in regulatory guidelines. 

One possibility to improve upon these methods is to incorporate further structure, balancing  interpretability, efficiency and flexibility. As described in Section   \ref{model}, we address this goal via  \comirep. Our formulation interpolates two extremal densities $f_0(y)$ and $f_{\infty}(y)$---defined via mixture models---through a single monotone increasing function $\beta(x) \in [0,1]$, which induces a flexible, yet interpretable and parsimonious, characterization of $f_x(y)$. Although inference for the statistical model outlined in Section \ref{model} can proceed under different paradigms---including the frequentist one---we focus here on a Bayesian implementation. This allows coherent uncertainty quantification and tractable inference on relevant quantities for quantitative risk assessment---such as the additional risk $R_{\mbox{\textsc{a}}}(x,a)$, the BMD$_q$, and more conservative benchmark doses BMDL$_q$ based on lower credible bounds of the BMD$_q$ \citep[e.g.][]{crump:1995, piego:2005}. These quantities are non-linear functions of the parameters, thereby requiring asymptotic results and approximations for tractable frequentist inference. Under a Bayesian approach, the posterior for these quantities can be  obtained via Monte Carlo methods exploiting the posterior samples of the parameters. Moreover, a Bayesian treatment allows formal incorporation of prior knowledge and inclusion of relevant restrictions. This Bayesian implementation is described in Section~\ref{gibbs}, and is compared against  notable competitors on several simulation experiments in Section \ref{simu}. The CPP application is explored in Section \ref{cpp}, whereas Section \ref{disc} contains concluding remarks and generalizations. 

The code, data, and tutorial implementation of \comire  are available at \url{github.com/tonycanale/CoMiRe}. Additional results and analyses can be found also in the Supplementary Materials.

\section{Convex Mixture Regression}
\label{model}
Let $y \in \mathcal{Y}\subseteq \R$. Our focus is on modeling the conditional density $y \mid x \sim f_x(y)$ via the mixture model
\begin{eqnarray}
f_{x}(y)=f(y \mid x)= \int_{\Theta} \mbox{K}(y; \btheta) d P_{x}(\btheta),
\label{mix}
\end{eqnarray}
where $\mbox{K}(y;\btheta)$ is a kernel with support on $\mathcal{Y}$ and parametrized by $\btheta \in \Theta$, whereas $P_x$ is a mixing measure changing with $x$. Consistent with our focus, let $\mathcal{X}= \R^+$. To clarify how \comire can be applied to many kernels, we avoid choosing a specific $\mbox{K}(y;\btheta)$ for the moment.

Recalling  Section \ref{lit}, we look for a specification of  $P_{x}$ which maintains a degree of flexibility in modeling $f_x(y)$, while facilitating interpretable and efficient inference on relevant functionals for quantitative risk assessment. We address this goal by assuming $P_x$ is a convex combination of the two mixing measures $P_{\minx}$ and $P_{\maxx}$ at $x=0$ and $x \rightarrow \maxpar$, respectively, obtaining 
\begin{eqnarray}
P_x=\{1-\beta(x)\}P_{\minx}+\beta(x)P_{\maxx},  \quad x>0,
\label{eq:convex1}
\end{eqnarray}
with $\beta(x)$ a  bounded monotone nondecreasing interpolation function having $\beta(0)=0$ and \mbox{$\beta(x) \rightarrow 1$} as $x \to \infty$. Hence, changes in $P_x$  with $x$ are controlled by a nondecreasing  function $\beta(x)$ inducing a continuous shift in $P_x$ from $P_{\minx}$ to $P_{\maxx}$, as $x$ grows. There are several ways to define $\beta(x)$ under the conditions of monotonicity---i.e $\beta'(x) \geq 0$---and codomain constraints---i.e. $\beta(x) \in [0,1]$.  A simple strategy to address this goal is to consider the basis expansion 
\begin{equation}
\beta(x) =  \sum_{j=1}^{J} w_j \psi_j(x), \quad x >0 ,
\label{eq:mixIsplines}
\end{equation}
where $\psi_1(x), \ldots, \psi_{J}(x)$ are monotone nondecreasing functions with 
\mbox{$\psi_{j}(x) \in [0,1]$} for all $j=1, \dots, J$. With this choice, the constraints on $\beta(x)$ are translated into \mbox{$0\leq w_j \leq1$} and \mbox{ $\sum_{j=1}^J w_j = 1$.} As will be clarified in Section \ref{gibbs}, this facilitates Bayesian inference. A tractable default choice for $\psi_1(x), \ldots, \psi_{J}(x)$, which is provably flexible in shape-constrained inference, is the I-splines basis \citep{ramsay:1988}. However, other specifications are possible, including general cumulative distribution functions or classical dose--response functions  \citep{ritz:2013,ritz_2005}.

To complete the specification of \comirep, we need a representation for the extreme mixing measures $P_{\minx}$ and $P_{\maxx}$. In defining these quantities, we aim to preserve flexibility, while incorporating additional relevant structure available in quantitative risk assessment studies.   In particular, the predictor is the dose of a chemical having increasingly adverse health effects at high values.  Hence, as $x$ grows, we expect the outcome to be centered on increasingly adverse health profiles, meaning that $\mbox{E}(y \mid x)=\mu(x)$ is monotone decreasing with $x$, when adverse health occurs for low values of $y$, as in our study.  We include this adversity profile property and flexibility in  $f_x(y)$, by letting
\begin{eqnarray}
P_{\minx} =  \sum_{h=1}^{H} \nu_{\minpar h} \delta_{\btheta_{\minpar h}},\quad  \quad
P_{\maxx} =  \delta_{\btheta_{\maxpar}},
\label{eq:convex2}
\end{eqnarray}
with $\delta_\btheta$ being the Dirac's delta mass at point $\btheta$, $H$ the total number of mixture components in $x=0$, and appropriate restrictions for the atoms, so that the single mixture component at $x\to \infty$ is centered on a more adverse health profile than all the mixture components at $x=0$, obtaining
\begin{eqnarray}
\mu_\maxpar < \mu_{\minpar h}, \quad \mbox{for every} \quad h=1, \ldots, H.
\label{eq:order_restriction}
\end{eqnarray}
In \eqref{eq:order_restriction}, ${\mu_{\maxpar}}$ denotes the expectation of the single mixture component at the extreme dose range, whereas  $\mu_{\minpar h}$ is the expectation of the mixture component $h$  characterizing the density at the low end of the dose range, for every $h=1, \ldots, H$. Leveraging representations  \eqref{mix}--\eqref{eq:convex2}, equation \eqref{eq:order_restriction} is sufficient to guarantee that $f_x(y)$ will shift to be centered on more adverse values of the health response as dose $x$ increases.  Indeed, under  \eqref{mix}--\eqref{eq:convex2}, we obtain
\begin{eqnarray}
\mbox{E}(y \mid x)=\mu(x)  &=& \{1-\beta(x)\}\mbox{E}(y \mid {\minpar})+\beta(x)\mbox{E}(y \mid {\maxpar})\notag  \\
&=& {\boldsymbol{\nu}}_{{\minpar}}^{\intercal}{\bmu_{\minpar}}+({\mu_{\maxpar}}-{\boldsymbol{\nu}}_{{\minpar}}^{\intercal}{\bmu_{\minpar}}) \beta(x), 
\label{mean}
\end{eqnarray}
where ${\boldsymbol{\nu}}_{{\minpar}}=(\nu_{{\minpar}1}, \ldots, \nu_{{\minpar}H})^{\intercal}$ and ${\bmu_{\minpar}}=({\mu_{\minpar1}}, \ldots, {\mu_{\minpar H}} )^{\intercal}$ are the vectors containing the mixing weights and the expectations, respectively, of the components at the low end.

Although it is possible to relax \eqref{eq:order_restriction} in various ways without affecting the monotonicity assumption for $\mu(x)$, assuming the  component at the high end of the dose range to be centered on a more adverse health profile than all the components at the low end has appealing interpretation in dose--response studies. Indeed,  equation \eqref{eq:order_restriction} implies that the single component at the very high end of $\mathcal{X}$ corresponds to the most adverse health profile. Under \eqref{eq:convex1} and \eqref{eq:convex2}, this profile has probability zero and $\beta(x)$, when $x=0$ and $x>0$, respectively. This assumption  interestingly incorporates common nonresponse effects  in toxicology studies  \citep{good:1979} by allowing the exposed individuals to have a different degree $\beta(x)$ of susceptibility to dose, which increases with $x$. However, differently from current formulations \citep[e.g.][]{razzaghi2000}, \comire improves flexibility in modeling the nonresponders density $f_0(y)$ via a mixture model accounting for possible adverse effects of unobserved chemicals. Since there is less information in the data to learn $f_{\infty}(y)$, due to sparsity at high dose exposures, the responders density is defined via a  single kernel. Although this choice seems restrictive, in practice $x\rightarrow \infty$ is  never observed and we do not attempt inference on $f_x(y)$ for very large $x$.  This leads to robustness to the assumptions in $x\rightarrow \infty$, with $f_x(y)$ fully flexible at low dose and shifting steadily towards being centered on more adverse values as $x$ grows. 

Generalizations to include more complex  $f_{\infty}(y)$ are possible. We attempted this extension in initial analyses but observed no evident improvements.  As we will outline in  Sections~\ref{simu} and \ref{cpp}, our basic \comire leads to accurate results, and therefore we avoid complications affecting interpretation. 

\subsection{Interpretation and properties}
\label{prop}

Model \eqref{mix}--\eqref{eq:convex1} has several interpretations worth exploring. A fundamental one for quantitative risk assessment relates $\beta(x)$ to the additional risk function $R_{\mbox{\textsc{a}}}(x,a)=  F_x(a) - F_{{\minx}}(a)$ via
\begin{eqnarray}
R_{\mbox{\textsc{a}}}(x,a)  &=& \{1 -\beta(x)\}F_{{\minx}}(a) + \beta(x)F_{{\maxx}}(a) - F_{{\minx}}(a) \notag\\
&=& \beta(x) \{F_{{\maxx}}(a) - F_{{\minx}}(a)\} \notag\\
&=& \beta(x)R_{\mbox{\textsc{a}}}(\infty,a),  
\label{eq:risk}
\end{eqnarray}
where $F_{0}(a) =\mbox{pr}(y \leq a \mid 0)$, and $F_{\infty}(a) =\mbox{pr}(y \leq a \mid \infty)$. According to \eqref{eq:risk}, the benefit associated with our representation, compared to the unstructured density regression models discussed in Section~\ref{lit}, is that the effect of $x$ enters only via the flexible function $\beta(x)$ which is proportional to $R_{\mbox{\textsc{a}}}(x,a)$. Indeed, consistent with  \eqref{eq:risk}, the dose--response function $\beta(x)$ can be interestingly interpreted as a standardized additional risk at $x$ with respect to an arbitrarily high dose. In fact, $\beta(x)=R_{\mbox{\textsc{a}}}(x,a)/R_{\mbox{\textsc{a}}}(\infty,a)$. This allows inference on the BMD$_q$, which is available as the solution of the equation $\beta(x)=q/R_{\mbox{\textsc{a}}}(\infty,a)$,  for each $a \in \mathcal{Y}$  and benchmark risk $q$.

Besides the above interpretation, the formula for the conditional density $f_x(y)$ induced by model \eqref{mix}--\eqref{eq:convex1} can be intuitively seen also as the result of a process which travels in distribution between the starting location $f_{\minx}(y)$ and the ending one $f_{\maxx}(y)$, as {\em time} $x$ increases from ${\minpar} $ to ${\maxpar}$, with $\beta(x)$ denoting the proportion of the path traveled at {\em time} $x$. This metaphor eases the interpretation of $\beta(x)$, and is formally supported by Theorem 1 in the Supplementary Materials.

To conclude the discussion on model interpretation and properties note that, consistent with  \eqref{mean}, the conditional expectation $\mu(x)$ of the health outcome $y$ induced by our \comire coincides with the one characterizing popular dose--response  models discussed in Section \ref{lit} \citep{ritz:2013,ritz_2005}, after letting $\mu_{\minpar}={\boldsymbol{\nu}}_{{\minpar}}^\intercal{\bmu_{\minpar}}$, $\omega_1=1$ and $\psi_1(x)=\psi(x; {\boldsymbol{\lambda}})$.  At the same time, it is also clear that, while  relaxing the Gaussian assumption, the proposed basis expansion for $\beta(x)$ in \eqref{eq:mixIsplines} is more flexible compared to the parametric dose--response functions of routine use, thus generalizing the contributions of \citet{whee:2012} and \citet{piego_2014}.

\section{Bayesian Inference and Implementation}\label{gibbs}

Consistent with our application and the dose--response models outlined in Section~\ref{lit}, we focus on $y \in \mathcal{Y}\subseteq \R$, and let $\mbox{K}(y; \btheta) = \phi(y; \mu, \tau^{-1})$ be a Gaussian density with mean $\mu$ and variance $\sigma^2 =  \tau^{-1}$, with $\btheta = (\mu, \tau)$.  This choice also facilitates quantitative risk assessment, provided that, under the Gaussian assumption, $F_{{\minx}}(a)=\sum_{h=1}^{H} \nu_{\minpar h} \Phi\{(a-\mu_{0h}) /\sigma_{0h}\}$ and $F_{{\maxx}}(a)=\Phi\{(a-\mu_{\infty}) /\sigma_{\infty}\}$.

Before describing the Markov chain Monte Carlo  (MCMC) routine for posterior computation, we first include additional details on representations \eqref{mix}--\eqref{eq:convex2}, and  elicit the priors for each unknown quantity. Let us first introduce the augmented data $(b_i, c_i, d_i)$, for each $i=1, \ldots, n$, with $b_i \in \{1, \dots, J\}$ indicating the basis function associated with unit $i$ in \eqref{eq:mixIsplines}, $c_i \in \{1,\dots, H\}$ its mixture component at the low end of the dose range according to \eqref{eq:convex2}, and $d_i \in \{0,1\}$ an indicator for the membership to one of the two extreme mixing measures in \eqref{eq:convex1}. Conditioned on these augmented data we can define the following hierarchical relations consistent with our model:
\begin{align}
&  \left\{f_{x_i}(y) \mid d_i\right\}  = (1-d_i)f_{{\minx}}(y)+d_if_{{\maxx}}(y), \quad  d_i \sim \mbox{Bern}\{\beta(x_i)\},\nonumber \quad \\
& \left\{\beta(x_i) \mid b_i=j\right\} = \psi_{j}(x_i), \qquad b_i \sim \mbox{Cat}(w_1, \dots, w_{J}),  
 \label{data_aug}\\
&  \left\{\btheta_i \mid c_i=h,d_i=0\right\}={\btheta_{0h}}, \quad \, \, c_i \sim \mbox{Cat}(\nu_{01}, \ldots, \nu_{0H}),   \nonumber
\end{align}
where  $\mbox{Cat}(\rho_1, \dots, \rho_p)$ denotes a categorical variable with probabilities $(\rho_1, \dots, \rho_p)$,  $\mbox{Bern}(\rho)$ denotes a Bernoulli distribution with success probability $\rho$, while ${\btheta_{\minpar h}}=({\mu_{\minpar h}},{\tau_{\minpar h}})$. Marginalizing out the augmented data in \eqref{data_aug}, we obtain again   \eqref{mix},  \eqref{eq:mixIsplines} and \eqref{eq:convex2}, under the convexity assumption in \eqref{eq:convex1}.  This augmented form is useful to develop the MCMC. 

To conclude our Bayesian specification, we need to elicit prior distributions for the model parameters, including the mixture weights in \eqref{eq:convex2}, the basis coefficients of $\beta(x)$ in \eqref{eq:mixIsplines}, and each kernel-specific set of parameters. A natural prior for  ${\boldsymbol{\nu}}_{{\minpar}}=(\nu_{{\minpar}1}, \ldots, \nu_{{\minpar}H})^\intercal$ at $x=0$ is the Dirichlet distribution $\mbox{Dir}(\balpha)$ with vector of hyperparameters $\balpha = (\alpha_1, \dots, \alpha_H)$. According to  \citet{rousseau:2011}, setting small values for the elements in $\balpha$ provides a flexible prior for location-scale mixtures of Gaussians, which allows adaptive deletion of redundant components. Following a similar reasoning we define a Dirichlet prior for the coefficients of $\beta(x)$ in \eqref{eq:mixIsplines}, so that ${\boldsymbol w }=(w_1, \dots, w_{J})^\intercal\sim \mbox{Dir}(\boeta)$, with small values for the hyperparameters in $\boeta = (\eta_1, \dots, \eta_J)$. Finally, for the atoms ${\btheta_{\minpar h}}=({\mu_{\minpar h}},{\tau_{\minpar h}})$, $h=1,\ldots, H$, and \mbox{${\btheta_{\maxpar}}=({\mu_{\maxpar}},{\tau_{\maxpar}})$} characterizing the Gaussian kernels, we choose independent gamma priors $\mbox{Ga}(a_\tau,b_\tau)$ for the precision parameters, and truncated Gaussian distributions for the locations ${\mu_{\minpar h}}$, $h=1,\ldots, H$ and $\mu_{\maxpar}$, with mean $\mu \in \R$, variance $\kappa>0$, and truncations meeting the adversity profile property in \eqref{eq:order_restriction}, ${\mu_{\maxpar}}< \min_h({\mu_{\minpar h}})$.  

Under the above prior specification, samples from the posterior distribution can be obtained by sampling iteratively from the partially collapsed Gibbs sampler comprising the following steps.

\begin{enumerate}
\item Update $b_i$ from the full conditional categorical random variable having probabilities 
\[
\mbox{pr}(b_i = j \mid -) \propto w_j [ \{ 1 - \psi_j(x_i)\} f_{{\minx}}(y_i) + \psi_j(x_i) f_{{\maxx}}(y_i)],\]
 for every $j=1, \ldots, J$.
\item Update ${\boldsymbol w }$ from the full conditional Dirichlet distribution 
\[({\boldsymbol w } \mid - )  \sim \mbox{Dir}(\eta_1 + n_1, \dots, \eta_J + n_{J}),
\] where $n_j$ is the number of subjects in which $b_i=j$, and update $\beta(x)$ by applying equation \eqref{eq:mixIsplines}.
\item Update $d_i$ from 
\[
(d_i \mid - )\sim \mbox{Bern}\left[  \frac{\beta(x_i) f_{\maxx}(y_i)}{\{1-\beta(x_i) \} f_{\minx}(y_i) + \beta(x_i) f_{\maxx}(y_i)}\right].
\]
\item For individuals with $d_i=0$, update $c_i$  from the categorical variable having probabilities equal to  
\[
\mbox{pr}(c_i = h \mid -) \propto \nu_{{\minpar}h} \phi(y_i; {\mu_{\minpar h}}, {\tau^{-1}_{\minpar h}}),
\] for every $h=1, \ldots, H$.
\item Update the mixture weights ${\boldsymbol{\nu}}_{{\minpar}}$ characterizing $P_{\minx}$ from 
\[
({\boldsymbol{\nu}}_{{\minpar}} \mid - ) \sim \mbox{Dir}\left(\alpha_1 + n_{1}, \dots, \alpha_H + n_{H} \right),\]
 where $n_{h}$ is the number of units with $d_i=0$, having $c_i=h$.
\item  Update each ${\btheta_{\minpar h}}=({\mu_{\minpar h}},{\tau_{\minpar h}})$, $h=1, \ldots, H$ under the restriction ${\mu_{\maxpar}}< \min_h({\mu_{\minpar h}})$, from 	\begin{eqnarray*}
&&({\mu_{\minpar h}}\mid -) \sim \mbox{N}(\hat{\mu}_h,\hat{\sigma}^2_h, \mu_{\maxpar}, +\infty), \\ 
&&({\tau_{\minpar h}} \mid -)\sim \text{Ga}(\hat a_{h}, \hat b_{h}),
\end{eqnarray*}
 where $\mbox{N}( \mu, \sigma^2,a,b)$ denotes a Gaussian  with mean $\mu$, variance $\sigma^2$, and truncated to the interval $(a,b)$. In the above equation, $\hat a_{h} = a_\tau + n_h/2$, $\hat b_{h} = b_\tau + \sum_{i=1}^n\I(d_i=0,c_i=h) (y_i - {\mu_{\minpar h}})^2/2$, $\hat{\sigma}^2_h=(\kappa^{-1}+n_h{\tau_{\minpar h}})^{-1}$, $\hat \mu_h = \hat{\sigma}^2_h(\kappa^{-1} \mu + n_h {\tau_{\minpar h}}\bar{y}_h)$, and $\bar{y}_h$ is the mean of $y$ for the units with  $d_i=0$ and $c_i=h$. The term $\I(\cdot)$ denotes, instead, the indicator function.
\item Finally, update  ${\btheta_{\maxpar}}=({\mu_{\maxpar}},{\tau_{\maxpar}})$ under the  restriction ${\mu_{\maxpar}}< \min_h({\mu_{\minpar h}})$,  from 		
	\begin{eqnarray*}
&& ({\mu_{\maxpar}}\mid -)\sim \mbox{N}(\hat{\mu}_{\maxpar},\hat{\sigma}^2_{\maxpar}, -\infty,  \mbox{min}_h({\mu_{\minpar h}})), \\
&& ({\tau_{\maxpar}} \mid -)\sim \text{Ga}(\hat a_{\maxpar}, \hat b_{\maxpar}), 
\end{eqnarray*}
where $\hat a_{\maxpar}= a_\tau + n_{\infty}/2$, $\hat b_{\maxpar} = b_\tau + \sum_{i=1}^nd_i(y_i - {\mu_{\maxpar}})^2/2$, $\hat{\sigma}^2_{\maxpar}=(\kappa^{-1}+n_{\maxpar}{\tau_{\maxpar}})^{-1}$, $\hat \mu_{\maxpar} = \hat{\sigma}^2_{\maxpar}(\kappa^{-1} \mu + n_{\maxpar} {\tau_{\maxpar}}\bar{y}_{\maxpar})$, $\bar{y}_{\maxpar}=\sum_{i=1}^nd_iy_i/n_{\maxpar}$, and $n_{\maxpar}=\sum_{i=1}^nd_i$.
\end{enumerate}
This routine can be easily generalized to incorporate mixture models also for $f_{\maxpar}(y)$ simply modifying step (7), and adding a step similar to (4), to update the  weights also at the high end of $\mathcal{X}$. 

\subsection{Goodness-of-fit via posterior predictive checks}
\label{check}
To assess whether the restrictions of \comire provide a good fit to the observed data under the Bayesian framework, we perform posterior predictive checks \citep{gelman2013} comparing relevant summary statistics computed from the observed data with their corresponding  posterior predictive densities. Under  \comirep, the posterior predictive density  $f_x(y) \mid (y_1,x_1), \ldots, (y_n,x_n)$ from which the response data can be simulated is defined as
\begin{eqnarray}
\begin{split}
\int  \Bigg[ & \{1 - \beta(x)\}  \sum_{h=1}^H \nu_{0h} \tau_{0h}\phi\{\tau_{0h}(y - \mu_{0h})\} +  \\    &  \beta(x) \tau_{\maxpar}\phi\{ \tau_{\maxpar}(y - \mu_{\maxpar})\} \Bigg] d \Pi({\boldsymbol{\nu}}_{{\minpar}}, {\boldsymbol{\theta}}_{{\minpar}},{\boldsymbol{\theta}}_{{\maxpar}},{\boldsymbol{w}}  \mid  \mbox{data}),
\end{split}
\label{post_pred}
\end{eqnarray}
where $\Pi({\boldsymbol{\nu}}_{{\minpar}}, {\boldsymbol{\theta}}_{{\minpar}},{\boldsymbol{\theta}}_{{\maxpar}},{\boldsymbol{w}} \mid \mbox{data})
$ denotes the joint posterior distribution of the model parameters. 

Although \eqref{post_pred} is not analytically available, it is straightforward to simulate outcome data $y$ from the posterior predictive distribution in \eqref{post_pred} by leveraging the MCMC samples of the  parameters and the hierarchical representation of the model in \eqref{data_aug}. Once these simulated data sets are available, different summary statistics of interest can be computed, thus obtaining samples from the corresponding  posterior predictive distribution. If \comire is not sufficiently flexible for a specific study, we expect the observed   summary statistics to fall in the tails of their associated posterior predictive density. Since our main interest is in quantitative risk assessment, we focus on comparing a smoothed empirical estimate of $\mbox{pr}(y\leq a \mid x)=F_x(a)$ computed from the dichotomized sample data $\{\I(y_1\leq a),x_1\}, \ldots, \{\I(y_n\leq a),x_n\}$, with its posterior predictive density obtained from the replicated data sets generated from \eqref{post_pred}. Although other statistics can be considered, according to \eqref{add_risk}, $F_x(a)$ is the key function underlying $R_{\mbox{\textsc{a}}}(x,a)$, thus requiring careful model checking.

\section{Simulation Experiments}
\label{simu}

To study the empirical performance of \comire in providing inference for $R_{\mbox{\textsc{a}}}(x,a)$ under several data generative processes, we consider three simulations, covering correctly specified models and model misspecification scenarios. To carefully assess performance in reasonable applied settings, these scenarios are defined to incorporate relevant aspects of the motivating CPP application and other related observational studies. These assessments involve two different sample sizes: $n=2000$ to mimic the CPP application, and $n=500$ to evaluate performance in smaller studies.

A common feature in  observational dose--response studies is that the data are increasingly sparse as the dose grows, with most of the individuals having reasonably low exposures to adverse chemicals. We incorporate this feature by generating the doses $x_i$, $i=1, \ldots, n$, from a gamma distribution with a heavy right tail. These simulated doses range from $0$ to about $132$, and have empirical quartiles $15.4$, $24.9$,  $39.8$ for the scenarios with $n=2000$, and $11.2$, $25.0$, $48.2$, for those with $n=500$. For each unit,  $y_i$ is generated under three different mechanisms which maintain a similar range and structure to the one characterizing the observed gestational age in the CPP data.

 \begin{figure*}
  \centering
     \subfigure[Scenario 1]{\includegraphics[width=.32\textwidth]{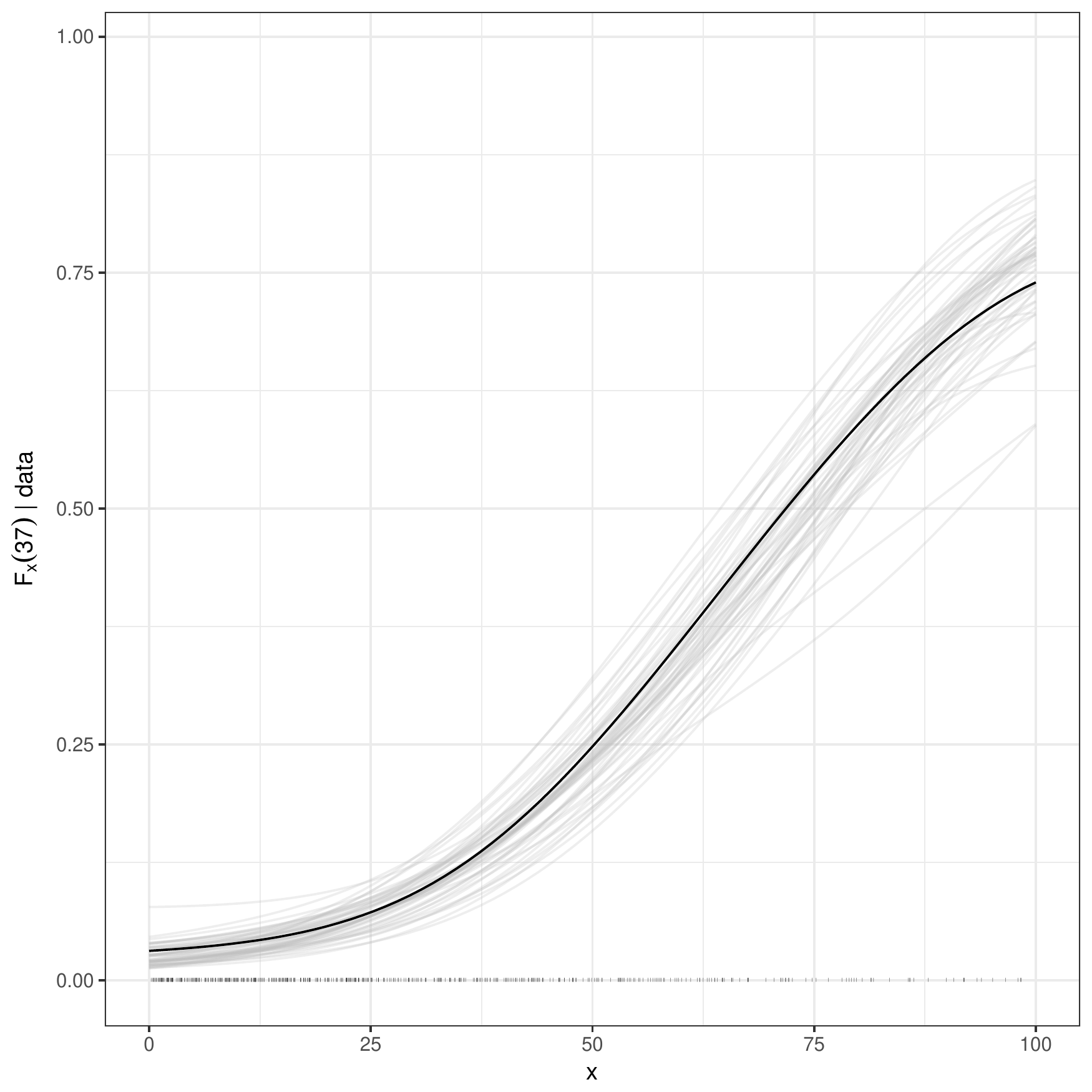}}
          \subfigure[Scenario 2]{\includegraphics[width=0.32\textwidth]{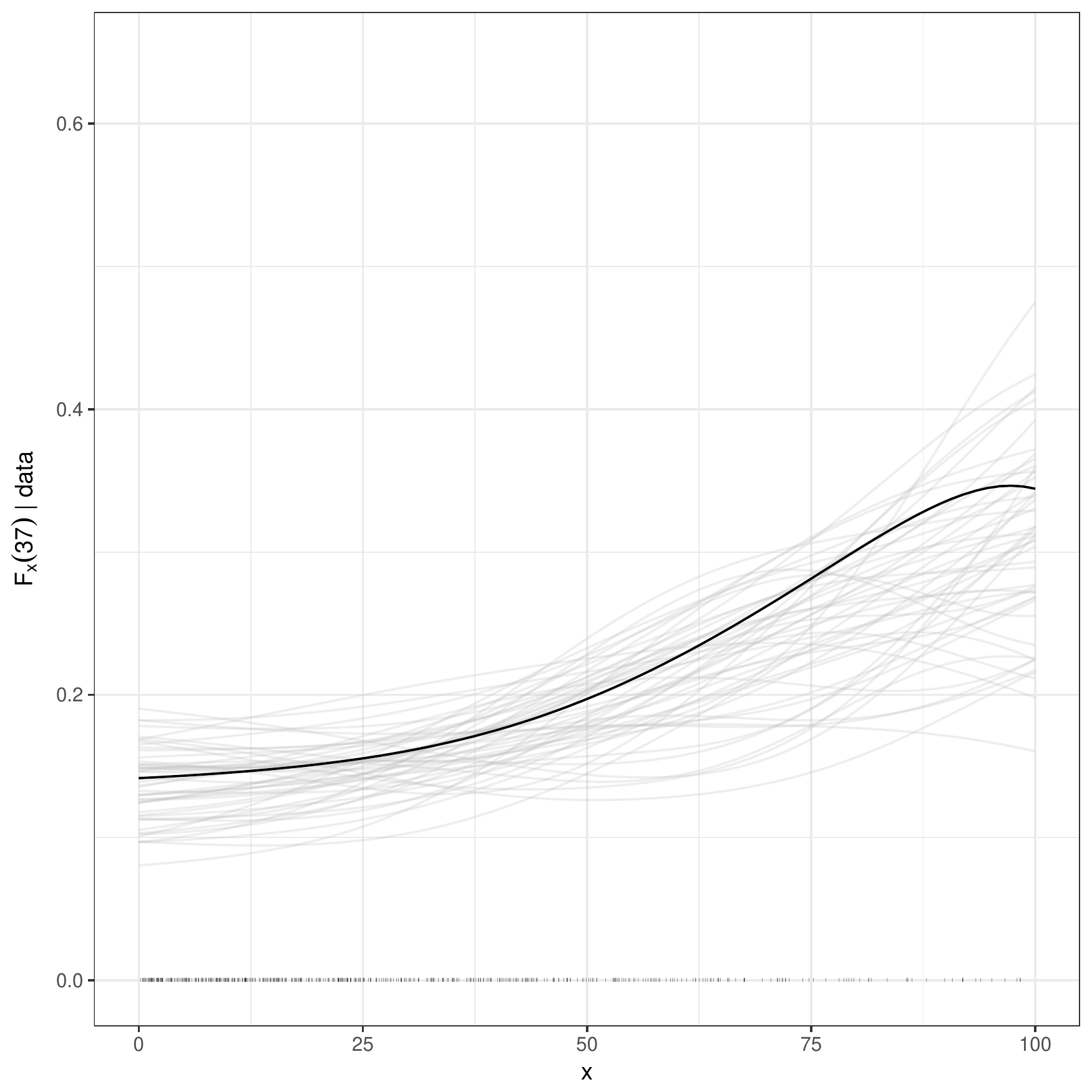}}
               \subfigure[Scenario 3]{\includegraphics[width=0.32\textwidth]{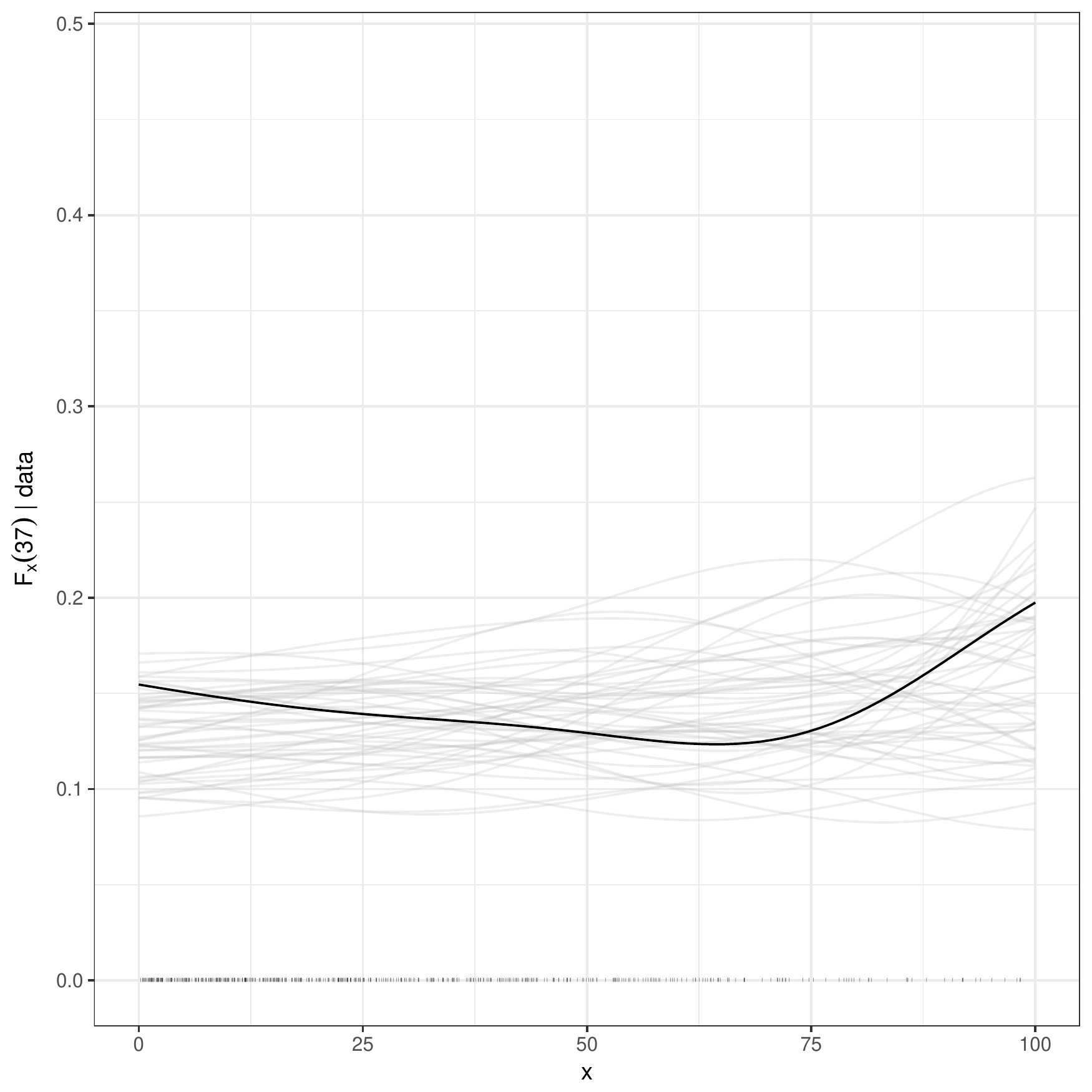}}
   \caption{Goodness-of-fit assessments for $n=500$ in Scenario 1 (a), 2 (b), and 3  (c). Smoothed empirical estimate of $F_x(37)=\mbox{pr}(y \leq 37 \mid x)$ computed from the observed data (black line), and from $50$ data sets simulated from the posterior predictive distribution induced by  \comire (grey lines). In the $x$ axis we also report the simulated dose exposures.}
   \label{fig:postpredcheck}
\end{figure*}

Scenario 1 covers the correctly specified setting and assumes the proposed \comire model where the extreme density at $x=0$ is a mixture of three Gaussian kernels having ${\bmu_{\minpar}}=(37, 39, 40)^\intercal$, ${\boldsymbol{\nu}}_{{\minpar}}=(0.05, 0.15, 0.80)^\intercal$, and unit variances, whereas the density at $x\rightarrow \infty$ is a single Gaussian kernel with ${\mu_{\maxpar}}=36$ and ${\sigma^2_{\maxpar}}=1$.  The $\beta(x)$ function is assumed to be the cumulative distribution function of a gamma random variable with shape 6 and rate 0.1.  Focusing on performance under model misspecification, we generate the data from a dependent process but without \comire specification in Scenario 2, and under an independent process in Scenario~3. Specifically, Scenario~2 assumes a  mixture of Gaussians with unit variances and locations changing with $x$. The mixing weights are $(0.10, 0.25, 0.65)$, whereas the locations are $\mu_1(x) = -x/300 + 35.5$, $\mu_2(x) =-x/50 + 38.5$, and $\mu_3(x) = -x/75 +40.5$, respectively. Scenario 3 is instead a mixture of Gaussians with weights equal to $(0.25, 0.25, 0.50)$, locations (37, 39, 41), and unit variances.

In performing posterior inference, we set the hyperparameters following the guidelines in Section~\ref{gibbs}. In particular we fix $H=10$ for the number of mixture components at zero dose, and set small values  $\alpha_1= \dots= \alpha_H=1/H$ in the Dirichlet prior for ${\boldsymbol{\nu}}_{{\minpar}} $ to allow adaptive shrinkage. The function $\beta(x)$ in \eqref{eq:mixIsplines} is instead defined via cubic I-splines with $7$ equally-spaced inner knots and $2$ more at the extremes of the observed dose range---i.e. 0 and 132. The  knot at $x=0$ incorporates the restriction $\beta(0)=0$, whereas the one at 132 introduces a zero constant function which allows asymptotes in $\beta(x)$ at extreme doses. This choice provides $J=10$ bases. By considering small values $\eta_1=\cdots=\eta_J=1/J$ for the hyperparameters in the Dirichlet prior for the I-splines coefficients ${\boldsymbol w }$, we facilitate automatic learning of which bases are required to flexibly characterize the dose--response function $\beta(x)$. Finally, we center the mean parameters in the Gaussian kernels on the empirical mean of the observed responses $\mu = \bar{y}$, and allow moderate variations in the expectation of the different components by setting $\kappa = 10$. For the hyperparameters  in the component--specific variances we set instead $a_\tau=b_\tau = 2$ to allow a moderate variability within each Gaussian kernel. To assess sensitivity we maintain the same default hyperparameters settings in all the simulations.

Posterior inference relies on $5000$ Gibbs iterations after a burn-in of $2000$, and thinning the chains every 5 samples. These settings were sufficient to obtain convergence and good mixing---based on the traceplots and the Geweke's diagnostic. The computations for each simulation took $\approx 2'$ and $\approx 6'$ for $n=500$ and $n=2000$, respectively,  under an \texttt{R} implementation  on a MacBook Pro with 2.8 GHz Intel i7 processor and 16 GB RAM. The following discussions are related to the case $n=500$. The  results for $n=2000$ are similar and reported in the Supplementary Materials. It shall be also noticed that the following figures, including those in Section \ref{cpp} and  in the Supplementary Materials, evaluate performance in a large dose range avoiding excessively high exposures, which are unlikely to occur in the population and are not of direct interest in quantitative risk assessments.

We study performance with a particular focus on recovering the true additional risk $R_{\mbox{\textsc{a}}}(x,a)$.  However, before focusing on $R_{\mbox{\textsc{a}}}(x,a)$, we first check model adequacy in Figure \ref{fig:postpredcheck}, following the guidelines provided in Section \ref{check}.  As is clear from Figure \ref{fig:postpredcheck}, our approach  provides adequate fit to the smoothed empirical estimate of $F_x(37)=\mbox{pr}(y \leq 37 \mid x)$ computed from the observed data, thereby motivating inference on the additional risk $R_{\mbox{\textsc{a}}}(x,a)$, with $a=37$.  Note that these plots can detect lack of fit---for example, if the dose--response relation is non--monotone, or if there are evident differences between the generative process of the observed data and the deflation--inflation mechanism of \comirep. In these situations, relaxing \comire restrictions under more flexible models is important to avoid bias. We shall however emphasize that non--monotonicity of dose--response across the range of exposures faced by humans in usual settings is rare. The most common type of non--monotonicity in these studies corresponds to down-turns at high doses---typically well beyond normal ranges of human exposures.  Similarly, evident departures from the \comire assumptions, although possible, are not frequent in observational dose--response studies.  Hence, we focus on monotone relations, and evaluate robustness to misspecification in reasonable scenarios.

\begin{figure*}
  \centering
   \subfigure[Scenario 1]{\includegraphics[width=0.32\textwidth]{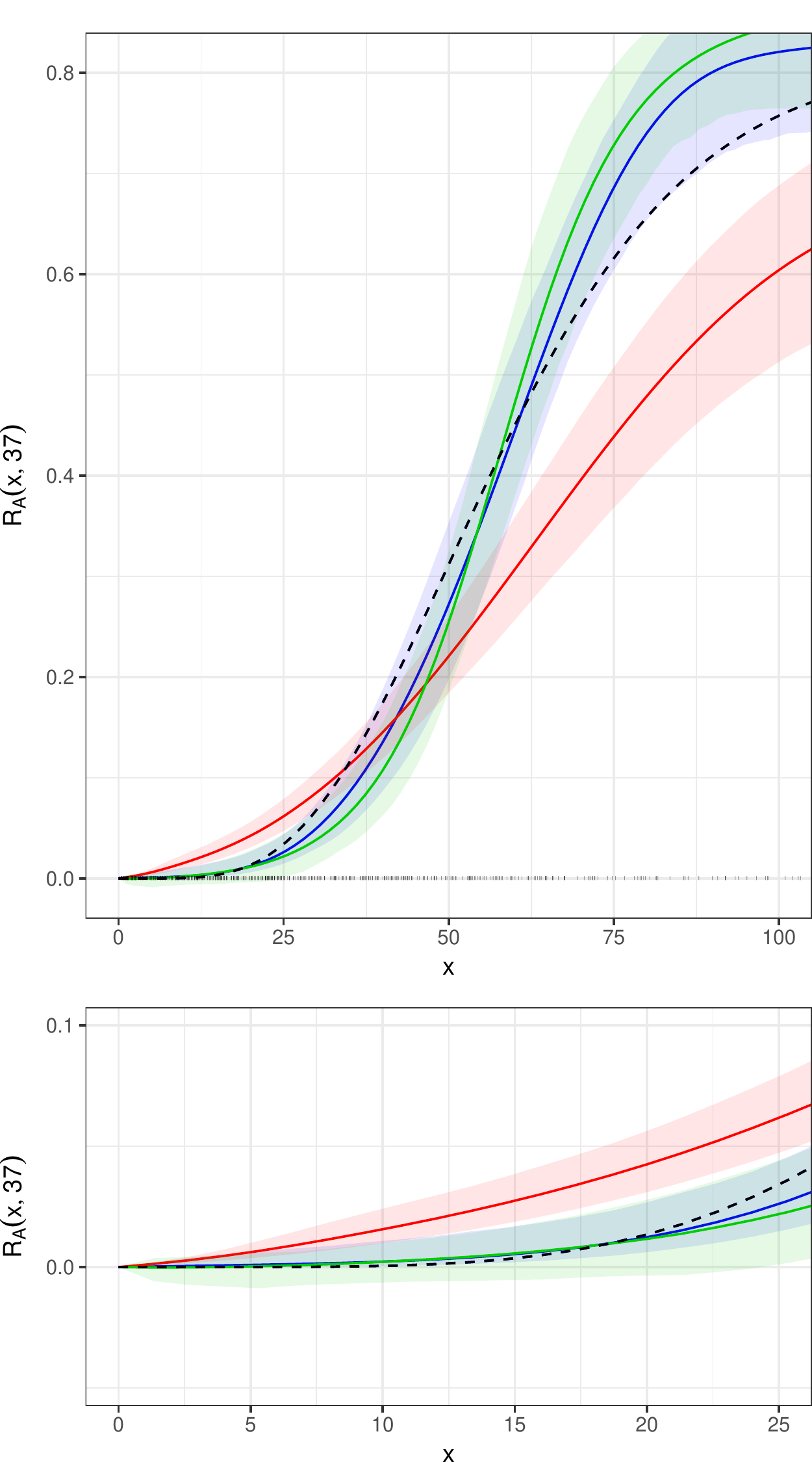}}
      \subfigure[Scenario 2]{\includegraphics[width=0.32\textwidth]{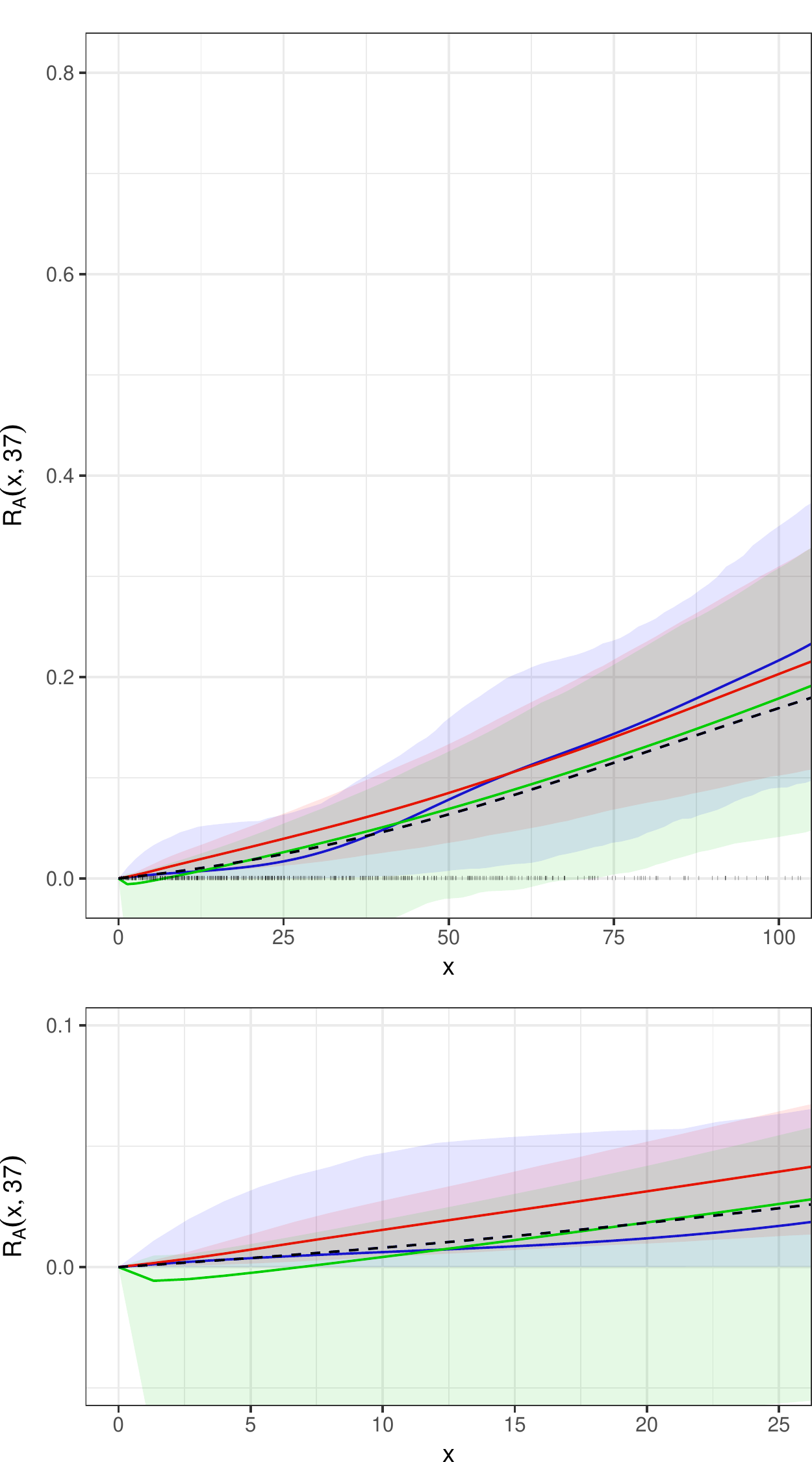}}
         \subfigure[Scenario 3]{\includegraphics[width=0.32\textwidth]{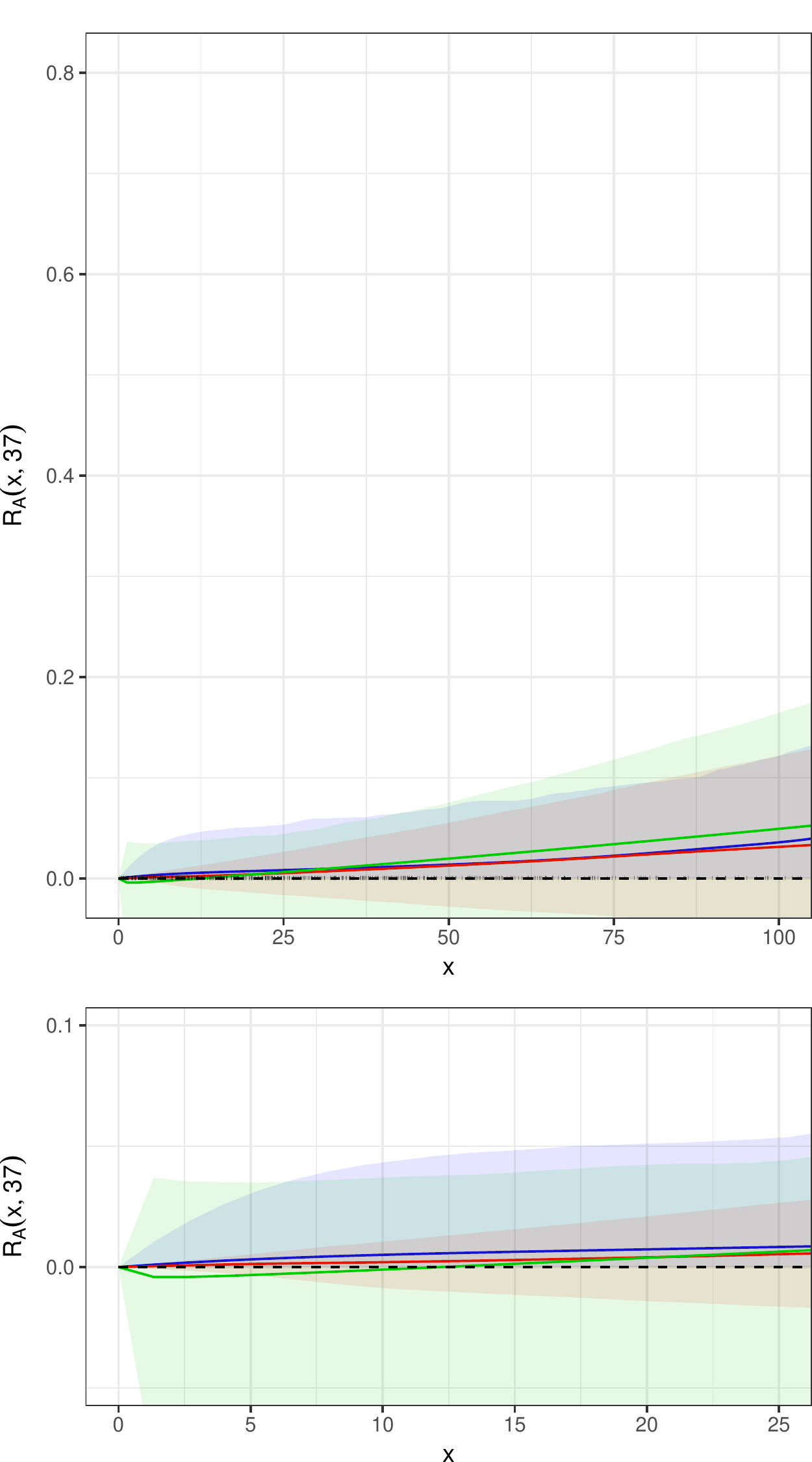}}
       \caption{Inference on the additional risk function for $n=500$ in the three scenarios.  The dashed lines represent the true additional risk function  $R_{\mbox{\textsc{a}}}(x,37)$, whereas the red, green, and blue continuous lines denote the posterior mean of $R_{\mbox{\textsc{a}}}(x,37)$ under \textsc{anova--ddp}, \textsc{f--dmix}, and \comirep, respectively. The shaded areas represent the pointwise 95\% posterior credible bands. In the $x$ axis we report the simulated dose exposures. Lower panels provide a zoom on the range of the additional risk typically considered in benchmark dose analysis. }
   \label{fig:sim_inference1}
\end{figure*}

To highlight the benefits of \comire in providing inference on the additional risk, we compare performance with the relevant competitors discussed in Section \ref{lit}. In particular, we consider a classical Gaussian regression  with a Weibull dose--response function for $\mu(x)$  \citep{ritz:2013,ritz_2005}, a more general formulation of the model in \citet{razzaghi2000}, via \textsc{anova--ddp} with fixed weights \citep{deio:etal}, and, finally, a fully flexible dependent mixture of Gaussians (\textsc{f--dmix}) including changes also in the mixing probabilities as in \citet{he:etal2010}. These models are estimated with the \texttt{R} packages \texttt{drc}, \texttt{DPpackage}, and \texttt{LSBP}, respectively. \texttt{LSBP} is a recent implementation of the probit stick--breaking by \citet{rodriguez:2011}, using the logistic link to improve computational performance without affecting flexibility \citep{rigon:2017}. In performing posterior inference under the \textsc{anova--ddp} and the \textsc{f--dmix}, we consider routine implementations via linear functions of the predictor, and set the hyperparameters using default choices which induce a comparable prior uncertainty to  \comirep. Figure \ref{fig:sim_inference1} summarizes  the posterior mean function and  pointwise 95\% posterior credible bands for $R_{\mbox{\textsc{a}}}(x,37)$ obtained under \comirep,  \textsc{anova--ddp} and \textsc{f--dmix}, in the three scenarios. Results under the  Weibull dose--response function were substantially inferior, and therefore are not reported here.

The results in Figure~\ref{fig:sim_inference1} confirm our discussion in Section \ref{lit} on the different approaches to inference in quantitative risk assessment. In particular, \comire allows improved learning of  $R_{\mbox{\textsc{a}}}(x,37)$ in the correctly specified scenario, and has comparable performance to the more flexible \textsc{anova--ddp} and \textsc{f--dmix}, in misspecified settings. As it can be noticed in the lower panels of Figure~\ref{fig:sim_inference1}, the estimation of $R_{\mbox{\textsc{a}}}(x,37)$ is particularly precise at low--dose exposures $x$, where quantitative risk assessments typically focus. The reduced performance at high--dose exposures is mainly due to sparsity in the data, thereby providing reduced information to effectively estimate the parameters characterizing $P_{\maxpar}$. This result is also evident in  Figure~\ref{fig:postpredcheck}. The \textsc{anova--ddp} provides accurate inference on $R_{\mbox{\textsc{a}}}(x,37)$  in Scenarios 2 and 3, where this model is correctly specified, but induces a notable bias in Scenario 1, possibly due to the assumption of constant mixing weights. When relaxing this restriction under  \textsc{f--dmix}, the bias is reduced in Scenario 1. However, consistent with the discussion in Section \ref{lit}, the increased flexibility associated with \mbox{\textsc{f--dmix}} requires more parameters, thereby reducing efficiency. This is evident in all the three scenarios, with  \textsc{f--dmix} having higher posterior uncertainty than \comirep, without substantially improving performance in estimating $R_{\mbox{\textsc{a}}}(x,37)$. According to the lower panels in Figure \ref{fig:sim_inference1}, this reduced efficiency is evident for the common range $(0,0.1]$ of the additional risk to be considered in benchmark dose analysis. These conclusions did not change when considering other thresholds $a$ different from $37$.

Similar results also are discussed for $n=2000$ in the Supplementary Materials. In this case the higher sample size further reduces bias and posterior uncertainty---as expected.

\section{Analysis of the CPP Data}
\label{cpp}
We conclude by applying \comire to infer changes in the gestational weeks at delivery with \textsc{dde}, as discussed in Section \ref{intro}. In particular, our focus is on $n=2312$ women with gestational ages less or equal than 45 weeks, since higher values are clear measurement errors.  Posterior inference is performed with the default hyperparameters and MCMC settings carefully described in the tutorial implementation at \url{github.com/tonycanale/CoMiRe}. Also in this case we obtain convergence and good mixing, with a runtime \mbox{of $\approx 8'$.}

As outlined in Figures~\ref{fig:motivate} and \ref{fig:cpp_pcc},  \comire provides an adequate fit to the observed data and the associated functionals of interest, thus motivating inference and quantitative risk assessments.  According to Figure~\ref{fig:motivate}, the conditional density of the  gestational age at delivery is far from being Gaussian and displays variability, skewness and multimodality patterns changing with \textsc{dde}.  Indeed, at low--dose exposures most of the probability mass is concentrated around normal pregnancies, with the posterior mean and the 95\% credible interval for $\mu(0)$ being $40.20$ and $(40.01,40.34)$. This is in line with results on normal gestational ages measured via last menstrual periods, as in our application \citep{longnecker2001}. In addition to these findings, we also notice mild negative skewness, with a minor mode closer to overdue pregnancies, and a tail towards preterm deliveries. As \textsc{dde} exposure grows, the negative skewness is still maintained, but the tail towards preterm deliveries increasingly inflates until characterizing a more evident mode centered on $\mu_{\maxpar}$. The posterior mean and the 95\% credible interval for the latter quantity are $36.22$ and $(35.36,36.94)$, respectively, thus providing evidence of preterm pregnancy profiles at high--dose exposures. 

\begin{figure}
   \centering
   \includegraphics[width=0.47\textwidth]{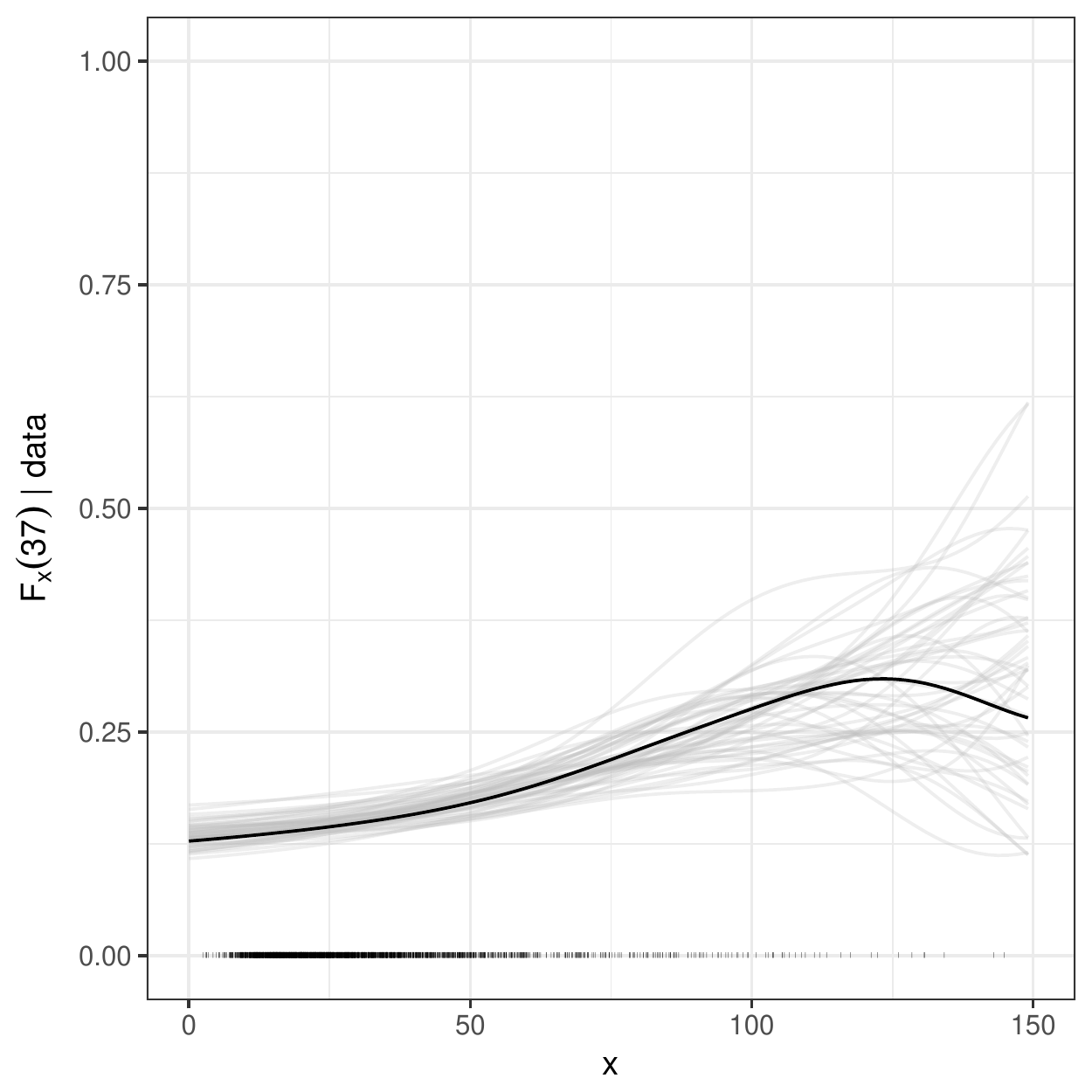}
   \caption{Goodness-of-fit assessment in the application. Smoothed empirical estimate of $F_x(37)=\mbox{pr}(y \leq 37 \mid x)$ computed from the observed data (black line), and from $50$ data sets simulated from the posterior predictive distribution induced by  \comire (grey lines). In the $x$ axis we report the observed exposures.}
   \label{fig:cpp_pcc}
\end{figure}

\begin{figure} 
   \centering
            \subfigure[]{\includegraphics[width=0.32\textwidth]{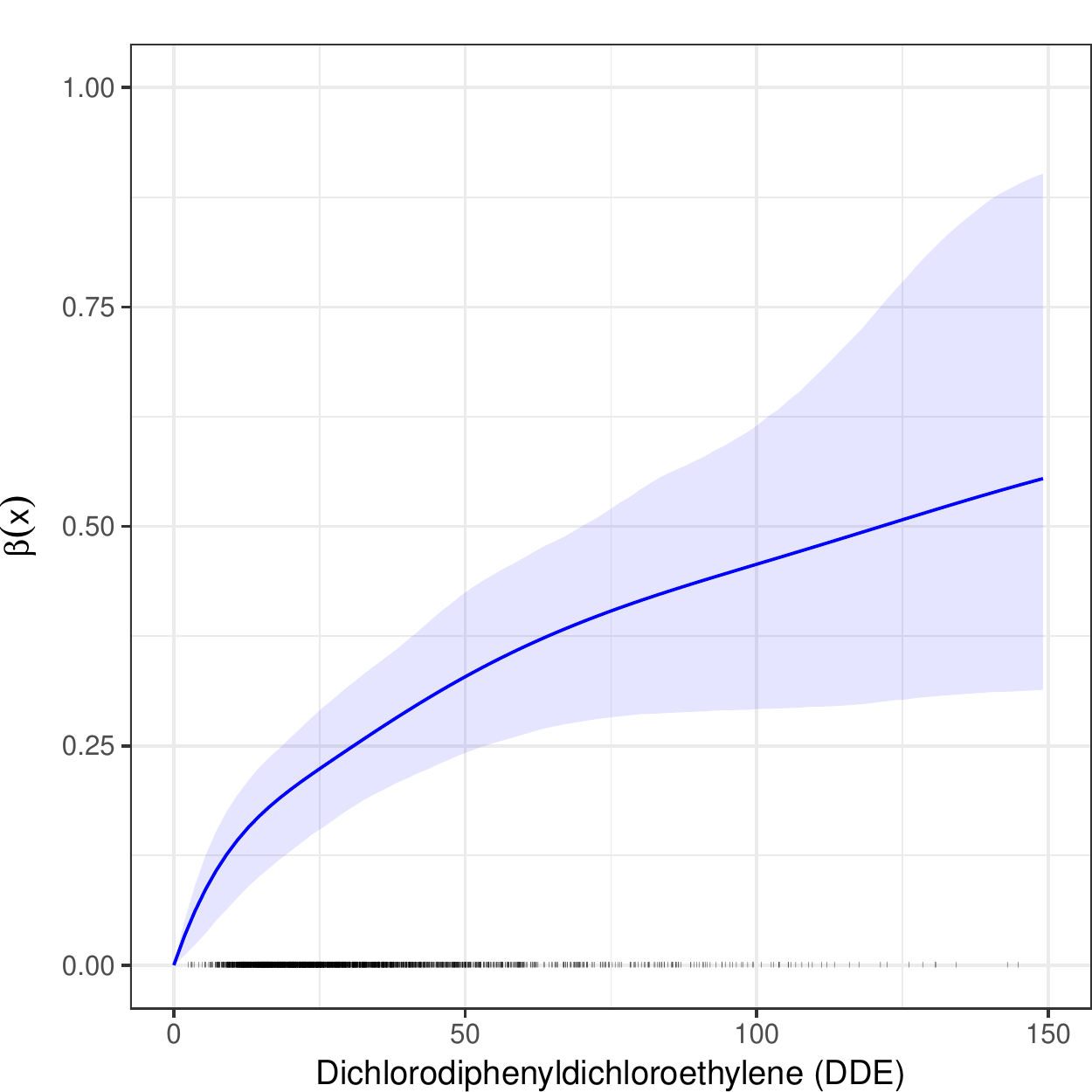} }
            \subfigure[]{\includegraphics[width=0.32\textwidth]{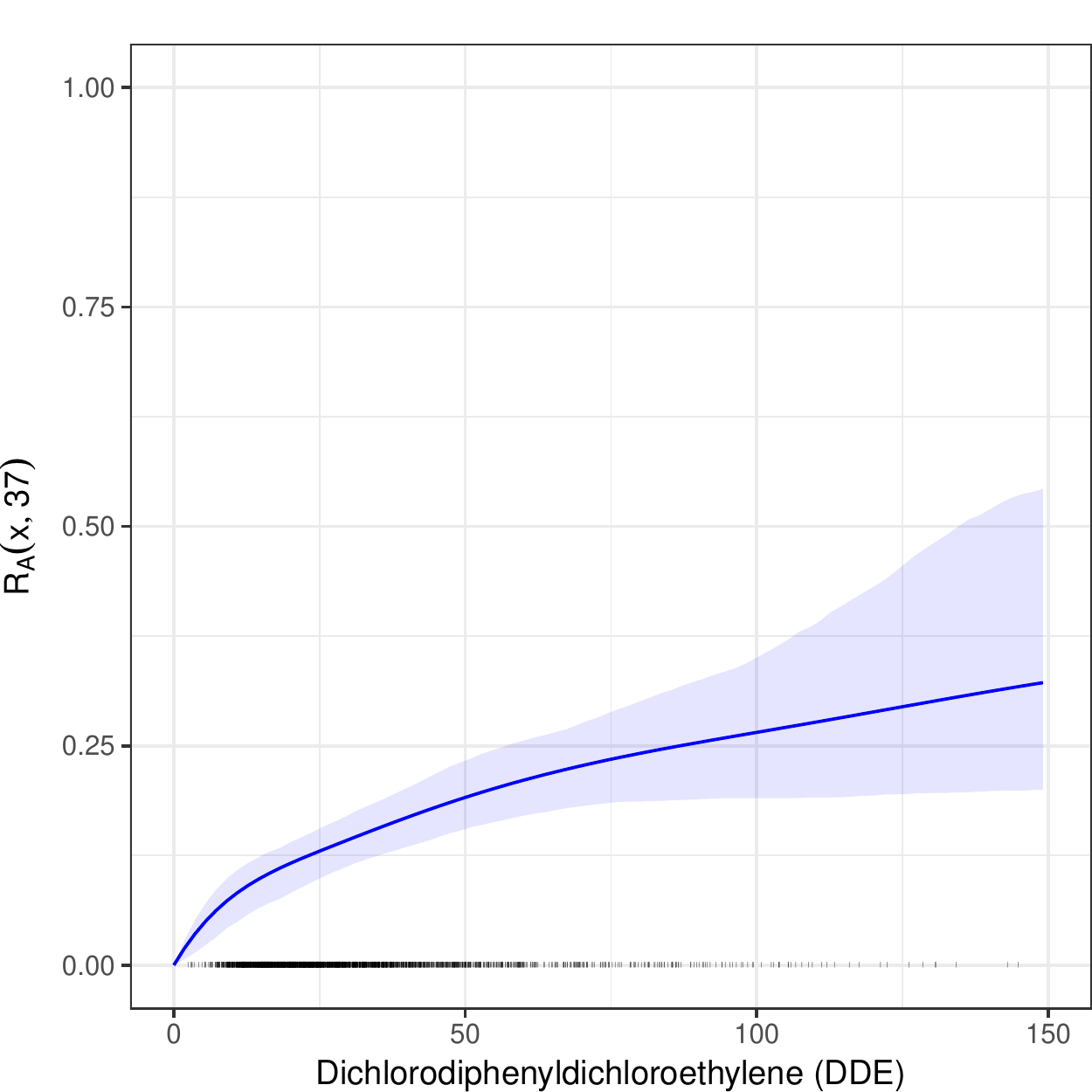} }
            \subfigure[]{\includegraphics[width=0.32\textwidth]{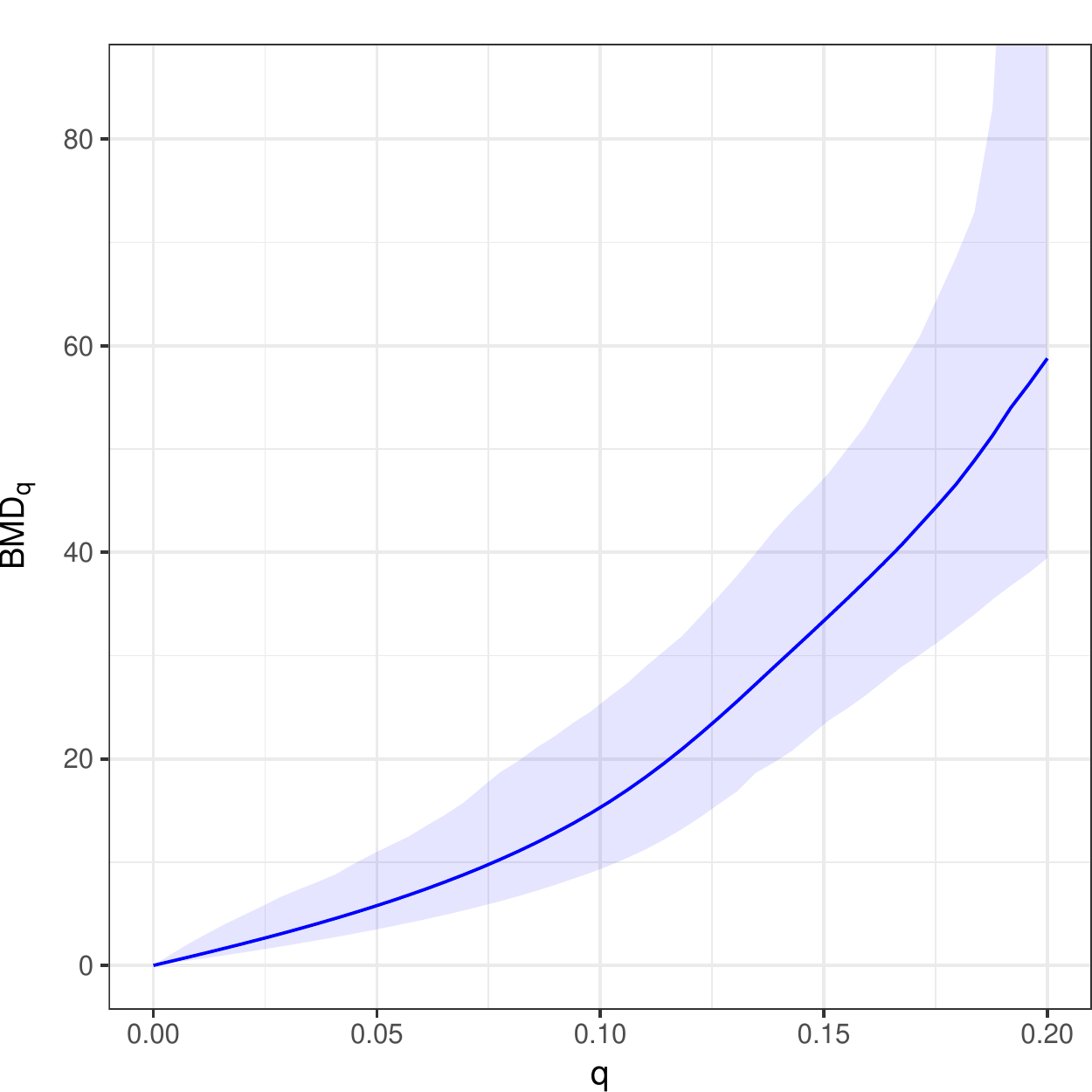} }
   \caption{Posterior mean (solid lines) and pointwise 95\% credible bands (shaded areas) for (a) $\beta(x)$, (b) $R_{\mbox{\textsc{a}}}(x,37)$ and (c) the related BMD$_q$. In the $x$ axis in (a) and (b) we report the observed exposures.  }
   \label{fig:inference_cpp}
\end{figure}

These results clearly show that the negative association of  \textsc{dde}  with gestational age are mostly on the shape, rather than on the trend. In particular, recalling the above discussion, both adverse and non-adverse profiles are found across the predictor space, including at no dose and high--dose exposures. What changes with  \textsc{dde}  is the degree $\beta(x)$ of susceptibility to the adverse effects of this chemical. In particular, the posterior mean and the 95\% credible bands of the $\beta(x)$ function in panel (a) of Figure \ref{fig:inference_cpp}, show a notable increment in the probability of the most adverse health profile at low--dose exposures. For instance, about $20\%$  of the women with a low  \textsc{dde}  concentration $\approx 20 \mu g/l$ are expected to have a gestational age comparable to women who had high  \textsc{dde}  exposures, with this profile being indicative of preterm pregnancies. Consistent with the properties of \comire discussed in Section \ref{prop}, this percentage explicitly measures in which proportion the additional risk at a given   \textsc{dde}  exposure relates to the one associated with the worst health profile observed at an arbitrarily high dose without the need to specify any threshold $a$ for the negative health event.

For benchmark dose analyses, we consider the standard preterm threshold \mbox{$a=37$.} Consistent with Section  \ref{prop}, rescaling $\beta(x)$ by $R_{\mbox{\textsc{a}}}(\infty,37)$ provides the additional  risk function $R_{\mbox{\textsc{a}}}(x,37)$---reported in panel (b) of Figure \ref{fig:inference_cpp}---inheriting the notable increment in the additional risk at low--dose exposures, which suggests conservative benchmark doses.  The posterior mean and the $95\%$ credible bands for the BMD$_q$---expressed as a function of $q$---confirm these findings. In particular, as shown in panel (c) of  Figure~\ref{fig:inference_cpp}, the BMD$_q$ evolves on low values, especially for the range of benchmark risks $q \in \{0.01, 0.05, 0.10\}$ of interest in toxicology \citep[e.g.][]{piego:2005}, thus implying the conservative policies on \textsc{dde} exposures  in Table~\ref{tab:bmd}. Motivated by interest in more conservative benchmark doses BMDL$_q$  relying on a lower confidence bound of the BMD$_q$, Table~\ref{tab:bmd} provides also the BMDL$_q$. Under our Bayesian approach to inference, such exposures can be obtained as the $5\%$ lower quantile of the posterior distribution of the BMD$_q$, instead of requiring frequentist asymptotic approximations.

\begin{table}
\centering
\begin{tabular}{lrrr}
  \hline
  & {\bf 0.01} & {\bf 0.05} & {\bf 0.10}  \\ 
  \hline
 BMD$_q$ & 1.03 {\scriptsize{[0.61, 2.62]}} & 5.79 {\scriptsize{[3.50, 11.00]}} & 15.29 {\scriptsize{[9.33, 25.41]}} \\ 
 BMDL$_q$ & 0.64 & 3.70 & 9.85\\ 
   \hline
\end{tabular}
\caption{For the three typical values of $q$, posterior mean and 95\% credible intervals of the BMD$_q$, along with the BMDL$_q$, characterizing the  5\% quantile of the posterior distribution of the BMD$_q$.}\label{tab:bmd}
\end{table}

We also performed inference varying $H$ to $H=5$ and $H=15$, but observed no relevant differences in inference on $R_{\mbox{\textsc{a}}}(x,37)$, BMD$_q$, and BMDL$_q$.

\section{Conclusion and Extensions}
\label{disc}

Motivated by quantitative risk assessment, we proposed a class of \comirep s balancing flexibility and parsimony in modeling conditional densities.  Although the focus is on situations in which the range of dose exposures is unbounded, \comire can be  applied also in studies when $x$ represents a concentration which is bounded above by $x_{\scriptsize{\mbox{max}}}$. In such settings, one appealing possibility is to focus on the rescaled predictor $\bar{x}=x/x_{\scriptsize{\mbox{max}}} \in [0,1]$ and define each  $\psi_j(\bar{x})$ in \eqref{eq:mixIsplines} via integrated Bernstein polynomials---i.e.  $\mbox{Beta}(j, 2^{\bar{J}} - j + 1)$---with $\bar{J}=\log_2\{J\}$. \comire can be also applied to binary health responses. In such settings it suffices to consider Bernoulli kernels $\mbox{K}(y; \btheta) =\pi^y(1-\pi)^{1-y}$, and let the extreme measures be $P_{\minx} =  \delta_{\pi_{\minpar}}$  and $P_{\maxx} =  \delta_{\pi_{\maxpar}}$, to obtain a fully flexible specification.

Consistent with our motivating application, we focus on the case in which lower outcome values $y$ are associated with adverse health. It is however straightforward to adapt the proposed model to scenarios in which higher values of $y$ are more adverse---e.g. blood cholesterol level. It is also possible to control for the effect of additional covariates $\bf{z}$, while maintaining the  adversity profile property in $x$. This can be done by allowing the location of each mixture component to change also as a function $g(\bf{z})$ of the additional covariates $\bf{z}$. Using the same $g(\bf{z})$ within each component, the adversity profile property in $x$ is maintained, and the additional risk  in $x$, for any $\bf{z}$, is $R_{\mbox{\textsc{a}}}(x,a;{\bf{z}})=\beta(x)R_{\mbox{\textsc{a}}}\{\infty,a ; g({\bf{z}})\}$, with  $R_{\mbox{\textsc{a}}}\{\infty, a ; g({\bf{z}})\}$ the additional risk at  $(\infty, \bf{z})$. Note also that, although we focus on observational data, \comire can be also applied in experimental studies where units are exposed to pre--specified doses. In such settings, \comire will efficiently interpolate between the fixed doses. In contrast, fully unstructured models, that attempt to infer the continuous dose--response relation from limited dose groups, will tend to have wide uncertainty across the gaps between the pre--specified doses.

It is also interesting to incorporate U-shape behaviors for $\mu(x)$ at low dose exposures, which may occur in toxicology studies \citep[e.g.][]{cala:2001}. One possibility to incorporate this property within \comire is to introduce an intermediate mixing distribution $P_{u}$ at the central value $x_u$ corresponding to the minimum of the U-shape. Under appropriate adversity profile restrictions for the atoms in $P_{\minx}$, $P_{u}$ and $P_{\maxx}$, the  U-shape can be incorporated, including uncertainty in $x_u$.

\appendix 
\section{Supplementary Material}
\subsection*{Formal Interpretation of the $\beta(x)$ Function}
According to Section 2.1  of the paper, \comire has several interpretations. Here, we formalize the metaphor interpreting the conditional density $f_x(y)$ as the result of a process which travels in distribution between the starting location $f_{\minx}(y)$ and the ending location $f_{\maxx}(y)$, as {\em time} $x$ increases from ${\minpar} $ to ${\maxpar}$. Theorem~\ref{teo1} clarifies the role of the $\beta(x)$ function in measuring the proportion of the path travelled at {\em time} $x$. 

\begin{theorem}
Let $F_{\minx}(y)$, $F_{\maxx}(y)$, and $F_x(y)$ characterize the probability distribution functions inducing the density functions $f_{\minx}(y)$, $f_{\maxx}(y)$ and $f_x(y)$, respectively, in equations (3)--(4), and let
\[
d_{\mbox{\textsc{tv}}}\{F_1(y),F_2(y)\}=\frac{1}{2}\int | f_1(y) - f_2(y) | d y\]
 be the total variation metric between the generic probability distribution functions $F_1(y)$ and $F_2(y)$ inducing the densities $f_1(y)$ and $f_2(y)$. Then 
 \[
 \beta(x) = d_{\mbox{\textsc{tv}}}\{F_x(y), F_{ \minx}(y)\}/d_{\mbox{\textsc{tv}}}\{F_{ \maxx}(y), F_{ \minx}(y)\}.
 \]
\label{teo1}
\end{theorem}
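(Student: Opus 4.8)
The plan is to exploit the linearity of the mixture representation in equations \eqref{mix}--\eqref{eq:convex1} to collapse both total variation distances to a single elementary integral, after which the claimed ratio becomes immediate. The first step I would take is to observe that the convex combination imposed at the level of the mixing measures transfers verbatim to the level of the densities. Substituting \eqref{eq:convex1} into \eqref{mix} and using the linearity of the kernel integral $\int_{\Theta}\mbox{K}(y;\btheta)\,d P_x(\btheta)$ in its mixing argument, I would record the key identity
\[
f_x(y) = \{1-\beta(x)\}\,f_{\minx}(y) + \beta(x)\,f_{\maxx}(y),
\]
so that the conditional density is itself the convex combination of the two extremal densities, carrying the \emph{same} weight $\beta(x)$ that governs the mixing measures.

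Next I would compute the numerator directly from this identity. Subtracting $f_{\minx}(y)$ from both sides gives the clean difference
\[
f_x(y) - f_{\minx}(y) = \beta(x)\,\{f_{\maxx}(y) - f_{\minx}(y)\}.
\]
Inserting this into the definition of $d_{\mbox{\textsc{tv}}}$ and extracting the scalar $\beta(x)$ from the absolute value, which is permissible precisely because the codomain constraint forces $\beta(x)\in[0,1]$ and hence $\beta(x)\geq 0$, I would obtain
\[
d_{\mbox{\textsc{tv}}}\{F_x(y),F_{\minx}(y)\} = \frac{\beta(x)}{2}\int |f_{\maxx}(y)-f_{\minx}(y)|\,d y = \beta(x)\,d_{\mbox{\textsc{tv}}}\{F_{\maxx}(y),F_{\minx}(y)\}.
\]
Dividing through by $d_{\mbox{\textsc{tv}}}\{F_{\maxx}(y),F_{\minx}(y)\}$ then yields the stated formula.

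Since the argument is essentially a one-line calculation once the density-level convex combination is in hand, there is no genuine obstacle here; the only two points requiring care are the ones already flagged above. First, the extraction of $\beta(x)$ from the absolute value rests on its nonnegativity, which is exactly the codomain constraint built into the model and should be invoked explicitly rather than silently. Second, the ratio is well defined only when $d_{\mbox{\textsc{tv}}}\{F_{\maxx}(y),F_{\minx}(y)\}\neq 0$, i.e. $f_{\maxx}\not\equiv f_{\minx}$; I would justify this nondegeneracy from the adversity profile restriction \eqref{eq:order_restriction}, which forces $\mu_{\maxpar}<\min_h(\mu_{\minpar h})$ and hence $\mbox{E}(y\mid\maxpar)\neq\mbox{E}(y\mid\minpar)$, so the two extremal densities cannot coincide.
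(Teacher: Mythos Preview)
Your argument is correct and matches the paper's own proof essentially line for line: both substitute the convex-combination representation of $f_x$ into the total-variation integral, pull out the nonnegative factor $\beta(x)$, and divide. Your explicit remarks on why $\beta(x)\geq 0$ licenses the extraction and why the denominator is nonzero are welcome additions that the paper leaves implicit.
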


\begin{proof}
Theorem \ref{teo1}, can be easily derived after noticing that 
\begin{eqnarray*}
\int | f_x(y) - f_{\minx}(y) | d y 
& =& \int | \{1 - \beta(x)\} f_{\minx}(y) + \beta(x) f_{\maxx}(y) - f_{\minx}(y) | d y \\
&=&\beta(x) \int | f_{\maxx}(y) - f_{\minx}(y) | d y =  2 \beta(x) d_{\mbox{\textsc{tv}}}\{F_{\maxx}(y), F_{\minx}(y)\},
\end{eqnarray*}
and therefore   $\beta(x) = d_{\mbox{\textsc{tv}}}\{F_x(y), F_{ \minx}(y)\}/d_{\mbox{\textsc{tv}}}\{F_{ \maxx}(y), F_{ \minx}(y)\}$, thus proving Theorem~\ref{teo1}.
\end{proof}

\newpage

\subsection*{Additional Simulation Results}

 \begin{figure*}[t]
  \centering
     \subfigure[Scenario 1]{\includegraphics[width=.32\textwidth]{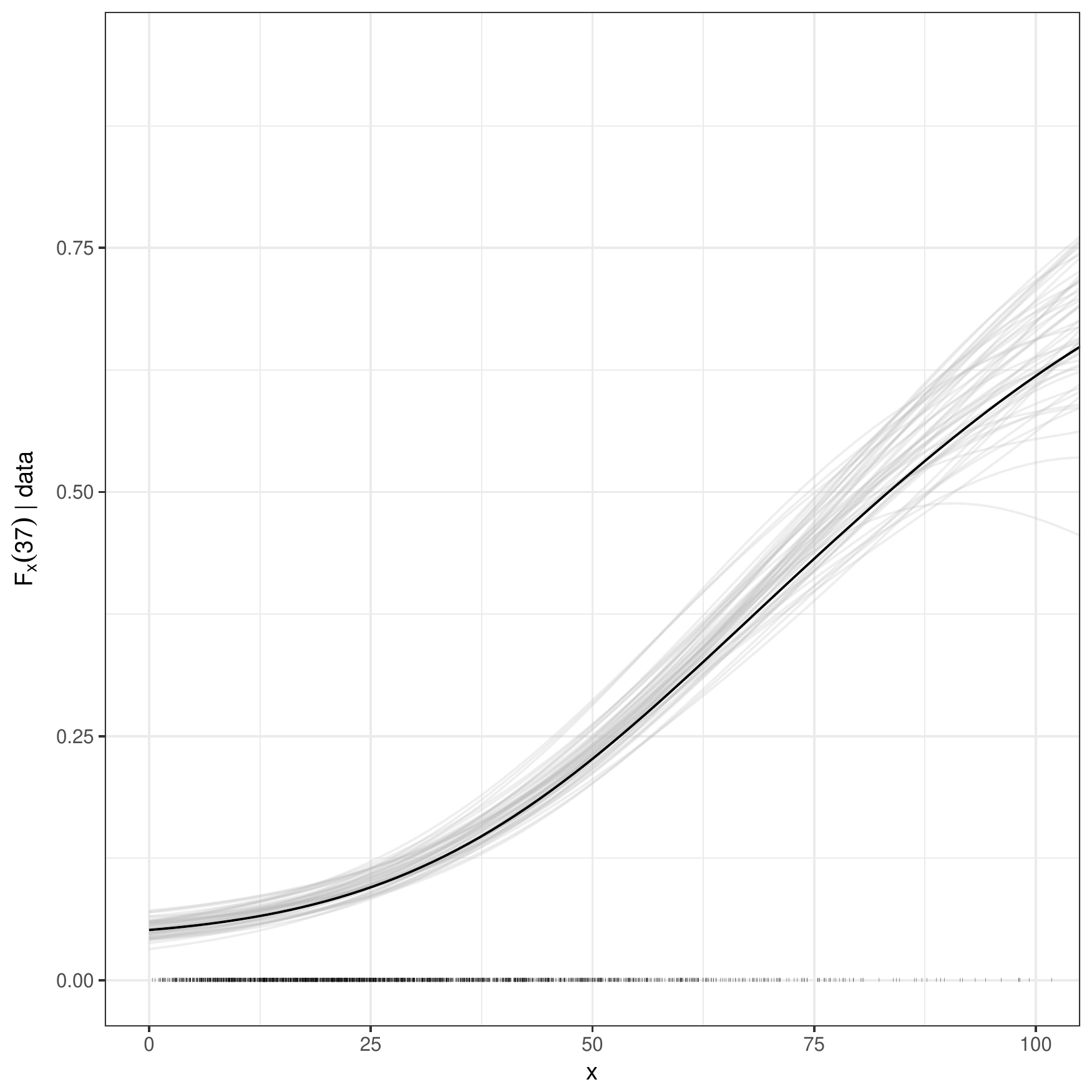}}
          \subfigure[Scenario 2]{\includegraphics[width=0.32\textwidth]{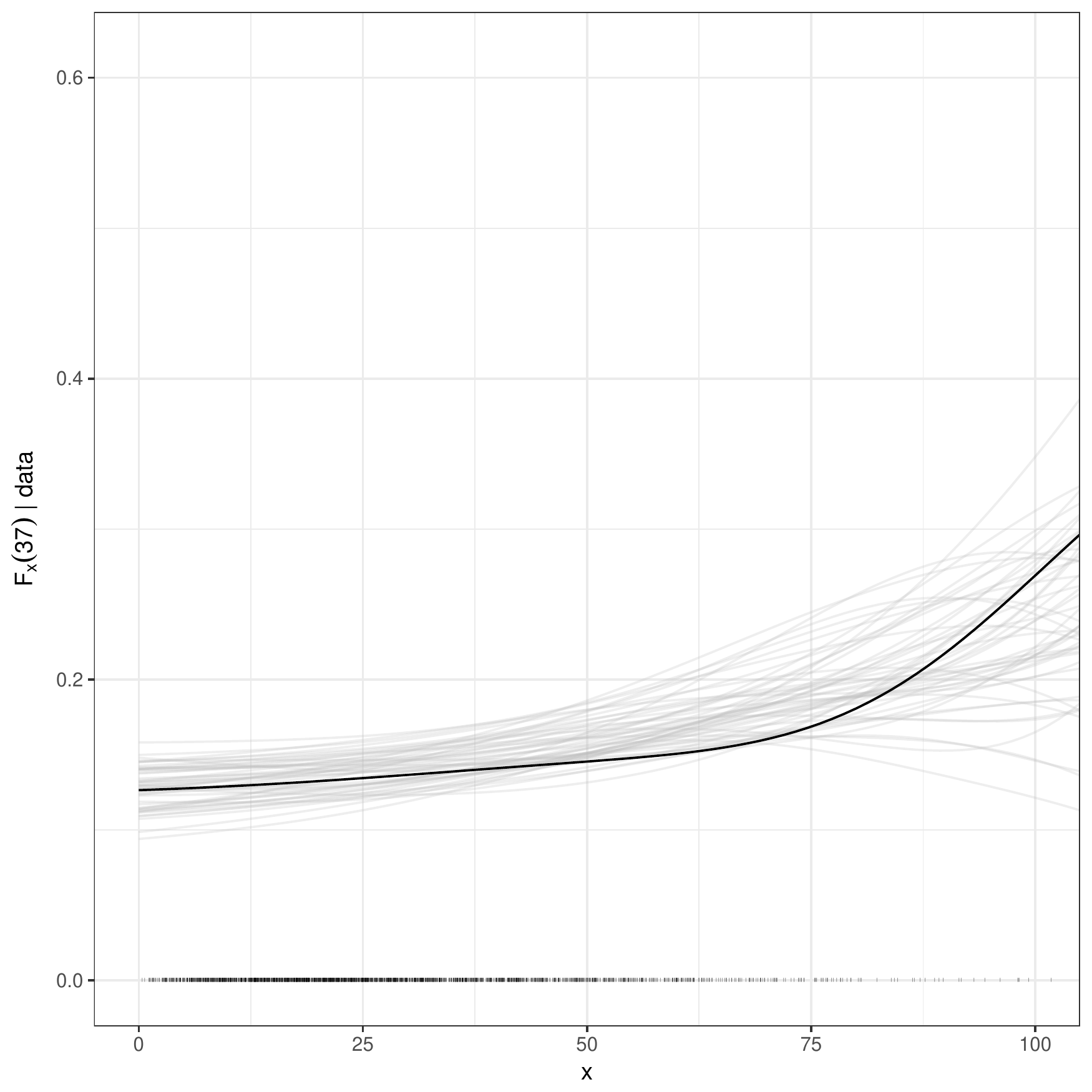}}
               \subfigure[Scenario 3]{\includegraphics[width=0.32\textwidth]{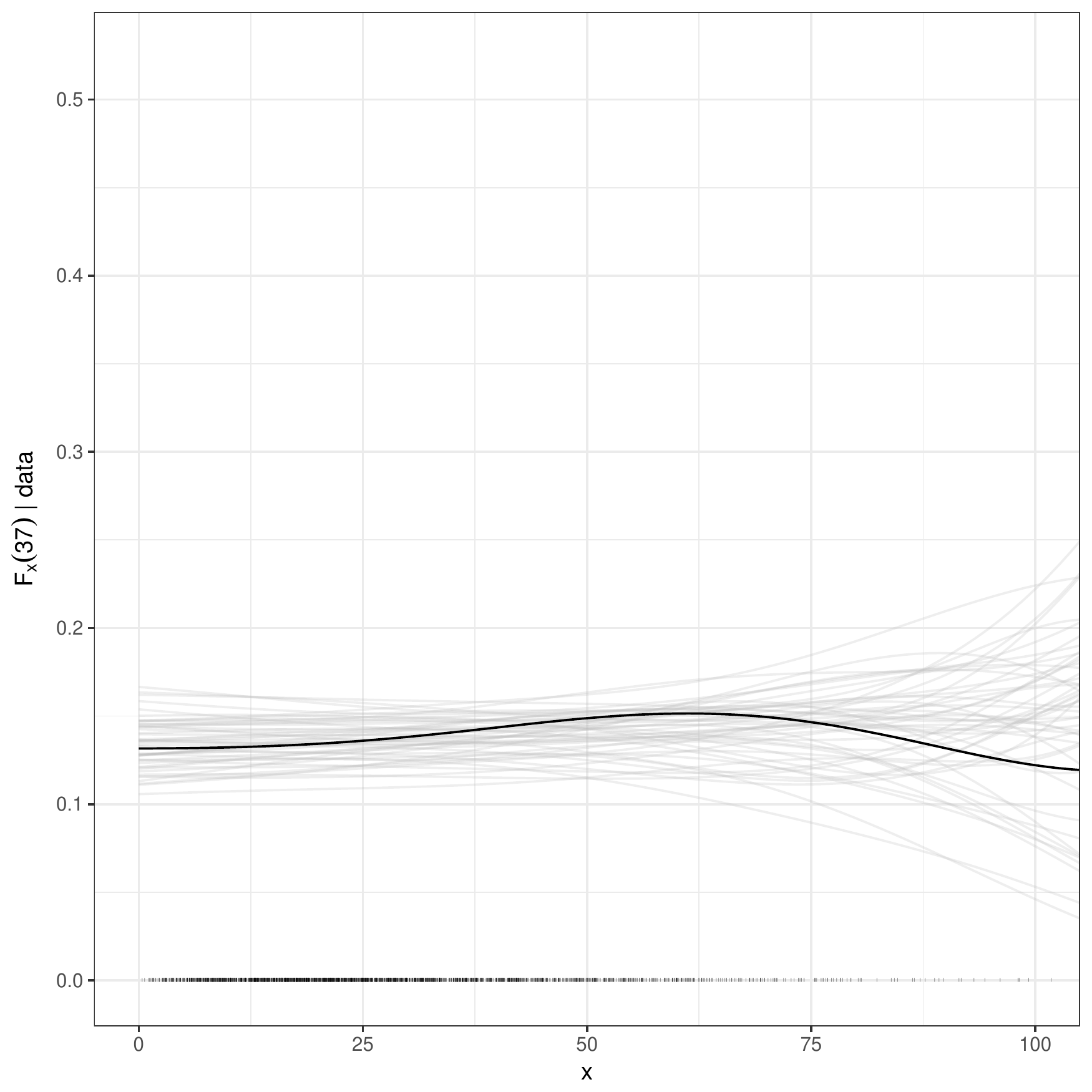}}
   \caption{Goodness-of-fit assessments for $n=2000$ in Scenario 1 (a), 2 (b), and 3  (c). Smoothed empirical estimate of $F_x(37)=\mbox{pr}(y \leq 37 \mid x)$ computed from the observed data (black line), and from $50$ data sets simulated from the posterior predictive distribution induced by  \comire (grey lines). In the $x$ axis we also report the simulated dose exposures.}
   \label{fig:postpredcheck_sm}
\end{figure*}

\begin{figure*}[t!]
  \centering
   \subfigure[Scenario 1]{\includegraphics[width=0.32\textwidth]{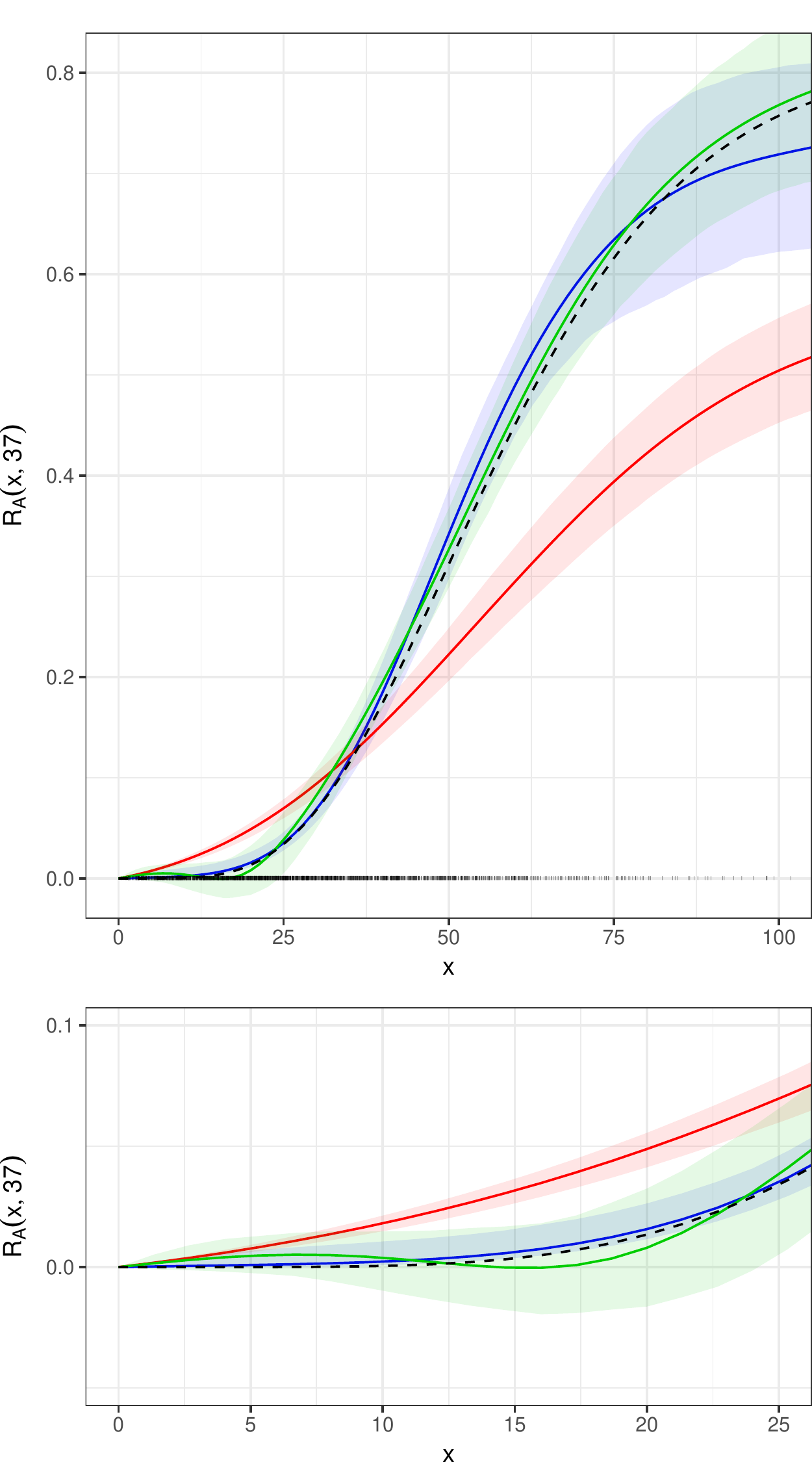}}
      \subfigure[Scenario 2]{\includegraphics[width=0.32\textwidth]{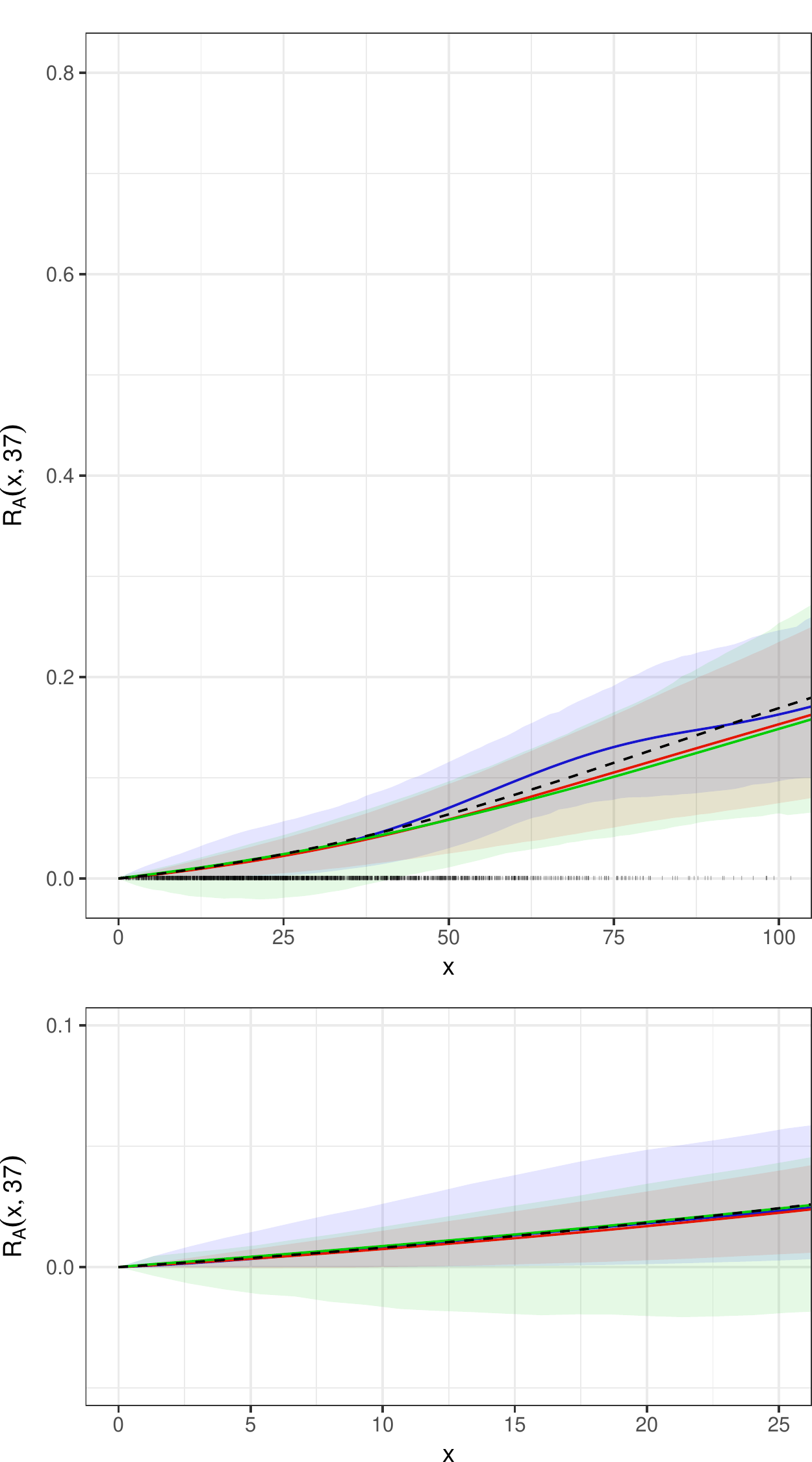}}
         \subfigure[Scenario 3]{\includegraphics[width=0.32\textwidth]{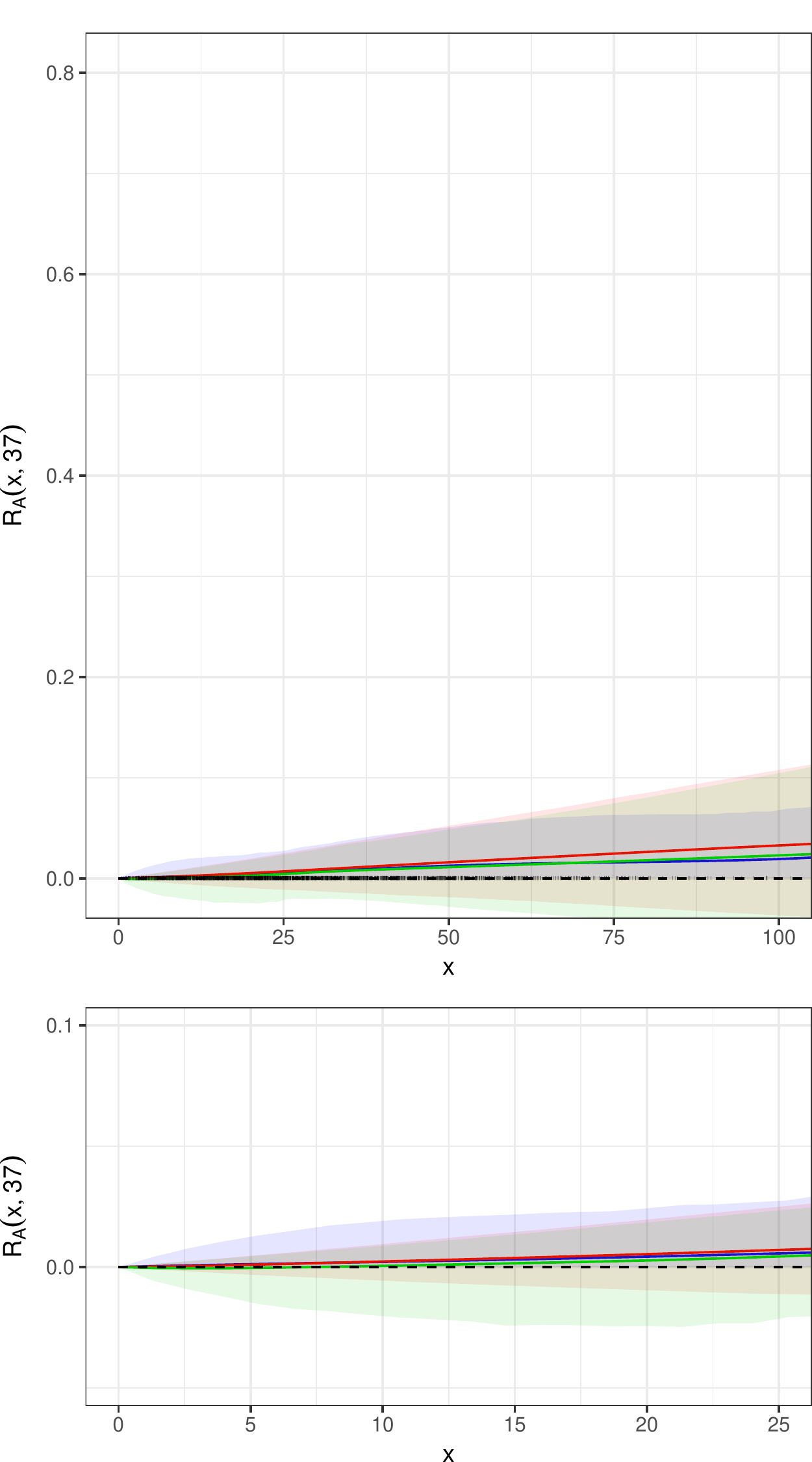}}
       \caption{Inference on $R_{\mbox{\textsc{a}}}(x,37)$ for $n=2000$ in the three scenarios.  The dashed lines represent the true additional risk function, whereas the red, green, and blue continuous lines denote the posterior mean of the additional risk under \textsc{anova--ddp}, \textsc{f--dmix}, and \comirep, respectively. The shaded areas represent the pointwise 95\% posterior credible bands. In the $x$ axis we report the simulated dose exposures. Lower panels provide a zoom on the range of the additional risk typically considered in benchmark dose analysis. }
   \label{fig:sim_inference1_sm}
\end{figure*}

Here, we reproduce the performance assessments of Section 4 in the paper under the same scenarios, simulation settings, prior specifications, and competing methods, but focusing on a larger sample  involving $n=2000$ units, instead of $n=500$. This situation mimics the sample size available in the CPP application. Refer to Section 4 for details on the three scenarios, prior specifications, and competing methods.

Consistent with the performance assessments in Section 4, we first check model adequacy in Figure \ref{fig:postpredcheck_sm} and then focus on evaluating inference for $R_{\mbox{\textsc{a}}}(x,a)$ in Figure~\ref{fig:sim_inference1_sm}. According to Figure \ref{fig:postpredcheck_sm}, also for $n=2000$, our approach  provides adequate fit to the smoothed empirical estimate of $F_x(37)=\mbox{pr}(y \leq 37 \mid x)$ based on the observed data in all the three scenarios. This motivates inference on the additional risk, whose quality is compared with the relevant competitors \textsc{anova--ddp} and \textsc{f--dmix} in Figure~\ref{fig:sim_inference1_sm}.

The results in Figure~\ref{fig:sim_inference1_sm} are in line with those discussed for  $n=500$ in Section 4 of the paper. Specifically,  \comire allows accurate learning of  $R_{\mbox{\textsc{a}}}(x,37)$ both in the correctly specified scenario, and in the cases of model misspecification.  The estimation of $R_{\mbox{\textsc{a}}}(x,37)$ is particularly precise at low--dose exposures, where quantitative risk assessments typically focus, as can be seen from the bottom panels of Figure~\ref{fig:sim_inference1_sm}. As expected, the increased simple size allows further improvements in posterior accuracy and precision for $R_{\mbox{\textsc{a}}}(x,37)$ compared to the situation in which $n=500$.

\subsection*{Code, Data and Tutorial Implementation}
Code, data and a tutorial implementation of \comire in the CPP application  are available at \url{github.com/tonycanale/CoMiRe}.


\section*{Acknowledgement}
The authors are grateful to the Editor, the Associate Editor and two referees for the helpful and constructive comments. The first author is supported by the University of Padova under the STARS Grant. This work is also partially supported by grant 1R01-ES028804 of the National Institute of Environmental Health Sciences of the US National Institutes of Health.  

\bibliographystyle{apalike}
{\fontsize{12}{14} \selectfont \bibliography{biblio}}

\begin{thebibliography}{}

\bibitem[Calabrese and Baldwin, 2001]{cala:2001}
Calabrese, E.~J. and Baldwin, L.~A. (2001).
\newblock U-shaped dose-responses in biology, toxicology, and public health.
\newblock {\em Annual Review of Public Health}, 22:15--33.

\bibitem[Crump, 1995]{crump:1995}
Crump, K.~S. (1995).
\newblock Calculation of benchmark doses from continuous data.
\newblock {\em Risk Analysis}, 15(1):79--89.

\bibitem[De~Iorio et~al., 2004]{deio:etal}
De~Iorio, M., M{\"u}ller, P., Rosner, G.~L., and MacEachern, S.~N. (2004).
\newblock An {ANOVA} model for dependent random measures.
\newblock {\em Journal of the American Statistical Association},
  99(465):205--215.

\bibitem[Dunson and Park, 2008]{duns:park:2008}
Dunson, D.~B. and Park, J.-H. (2008).
\newblock Kernel stick-breaking processes.
\newblock {\em Biometrika}, 95(2):307--323.

\bibitem[Fronczyk and Kottas, 2014]{fronc:2014}
Fronczyk, K. and Kottas, A. (2014).
\newblock A {B}ayesian nonparametric modeling framework for developmental
  toxicity studies.
\newblock {\em Journal of the American Statistical Association},
  109(507):873--888.

\bibitem[Gelman et~al., 2014]{gelman2013}
Gelman, A., Carlin, J.~B., Stern, H.~S., Dunson, D.~B., Vehtari, A., and Rubin,
  D.~B. (2014).
\newblock {\em Bayesian Data Analysis (3rd Edition)}.
\newblock CRC press Boca Raton, FL.

\bibitem[Good, 1979]{good:1979}
Good, P.~I. (1979).
\newblock Detection of a treatment effect when not all experimental subjects
  will respond to treatment.
\newblock {\em Biometrics}, 35:483--489.

\bibitem[He et~al., 2010]{he:etal2010}
He, B., Chen, M., Song, L., and Wang, D. (2010).
\newblock Mixture normal models in which the proportions of susceptibility are
  related to dose levels.
\newblock {\em Acta Mathematicae Applicatae Sinica, English Series},
  26(3):463--472.

\bibitem[Hwang and Pennell, 2014]{hwan:pennel}
Hwang, B.~S. and Pennell, M.~L. (2014).
\newblock Semiparametric {B}ayesian joint modeling of a binary and continuous
  outcome with applications in toxicological risk assessment.
\newblock {\em Statistics in Medicine}, 33(7):1162--1175.

\bibitem[Kodell and West, 1993]{kodell1993}
Kodell, R.~L. and West, R.~W. (1993).
\newblock Upper confidence limits on excess risk for quantitative responses.
\newblock {\em Risk Analysis}, 13(2):177--182.

\bibitem[Longnecker et~al., 2001]{longnecker2001}
Longnecker, M.~P., Klebanoff, M.~A., Zhou, H., and Brock, J.~W. (2001).
\newblock Association between maternal serum concentration of the {DDT}
  metabolite {DDE} and preterm and small-for-gestational-age babies at birth.
\newblock {\em The Lancet}, 358(9276):110--114.

\bibitem[Piegorsch and Bailer, 2005]{pieg:2005}
Piegorsch, W.~W. and Bailer, A.~J. (2005).
\newblock {\em Analyzing Environmental Data}.
\newblock John Wiley \& Sons.

\bibitem[Piegorsch et~al., 2005]{piego:2005}
Piegorsch, W.~W., West, R.~W., Pan, W., and Kodell, R.~L. (2005).
\newblock Low dose risk estimation via simultaneous statistical inferences.
\newblock {\em Journal of the Royal Statistical Society: C}, 54(1):245--258.

\bibitem[Piegorsch et~al., 2014]{piego_2014}
Piegorsch, W.~W., Xiong, H., Bhattacharya, R.~N., and Lin, L. (2014).
\newblock Benchmark dose analysis via nonparametric regression modeling.
\newblock {\em Risk Analysis}, 34(1):135--151.

\bibitem[Ramsay, 1988]{ramsay:1988}
Ramsay, J.~O. (1988).
\newblock Monotone regression splines in action.
\newblock {\em Statistical Science}, 3(4):425--441.

\bibitem[Razzaghi and Kodell, 2000]{razzaghi2000}
Razzaghi, M. and Kodell, R.~L. (2000).
\newblock Risk assessment for quantitative responses using a mixture model.
\newblock {\em Biometrics}, 56(2):519--527.

\bibitem[Rigon and Durante, 2017]{rigon:2017}
Rigon, T. and Durante, D. (2017).
\newblock Tractable {B}ayesian density regression via logit stick-breaking
  priors.
\newblock {\em arXiv:1701.02969}.

\bibitem[Ritz et~al., 2015]{ritz_2005}
Ritz, C., Baty, F., Streibig, J.~C., and Gerhard, D. (2015).
\newblock Dose-response analysis using {R}.
\newblock {\em PLoS One}, 10(12):e0146021.

\bibitem[Ritz et~al., 2013]{ritz:2013}
Ritz, C., Gerhard, D., and Hothorn, L.~A. (2013).
\newblock A unified framework for benchmark dose estimation applied to mixed
  models and model averaging.
\newblock {\em Statistics in Biopharmaceutical Research}, 5(1):79--90.

\bibitem[Rodriguez and Dunson, 2011]{rodriguez:2011}
Rodriguez, A. and Dunson, D.~B. (2011).
\newblock Nonparametric {B}ayesian models through probit stick-breaking
  processes.
\newblock {\em Bayesian Analysis}, 6(1):145--177.

\bibitem[Rousseau and Mengersen, 2011]{rousseau:2011}
Rousseau, J. and Mengersen, K. (2011).
\newblock Asymptotic behaviour of the posterior distribution in overfitted
  mixture models.
\newblock {\em Journal of the Royal Statistical Society: B}, 73(5):689--710.

\bibitem[West and Kodell, 1999]{westkondel}
West, R.~W. and Kodell, R.~L. (1999).
\newblock A comparison of methods of benchmark-dose estimation for continuous
  response data.
\newblock {\em Risk Analysis}, 19(3):453--459.

\bibitem[Wheeler and Bailer, 2012]{whee:2012}
Wheeler, M.~W. and Bailer, A.~J. (2012).
\newblock Monotonic {B}ayesian semiparametric benchmark dose analysis.
\newblock {\em Risk Analysis}, 32(7):1207--1218.

\bibitem[Yu and Catalano, 2005]{yucatalano}
Yu, Z.-F. and Catalano, P.~J. (2005).
\newblock Quantitative risk assessment for multivariate continuous outcomes
  with application to neurotoxicology: {T}he bivariate case.
\newblock {\em Biometrics}, 61(3):757--766.

\end{thebibliography}

\label{lastpage}
\end{document}